\newtheorem{theorem}{Theorem}
\newtheorem{lemma}{Lemma}
\newtheorem{proposition}{Proposition}
\newtheorem{remark}{Remark}
\newtheorem{definition}{Definition}
\newtheorem{disc}{Discussion}
\begin{document}

\title{\LARGE A Queueing Characterization of Information Transmission over Block Fading Rayleigh Channels in the Low SNR Regime\footnote{ \hrule\quad\\ This paper has been accepted by \textit{IEEE Trans. on Veh. Technol.} for future publication} }

\author{ Yunquan~Dong,
        Pingyi~Fan,~\IEEEmembership{Senior Member,~IEEE}\\
\{dongyq08@mails,~fpy@mail\}.tsinghua.edu.cn\\
Department of Electronic Engineering, Tsinghua University, Beijing,  China;\\
}

\maketitle

\begin{abstract}

Unlike the AWGN (additive white gaussian noise) channel, fading channels suffer from random channel gains besides the additive Gaussian noise. As a result, the instantaneous channel capacity varies randomly along time, which makes it insufficient to characterize the transmission capability of a fading channel using data rate only. In this paper, the transmission capability of a buffer-aided block Rayleigh fading channel is examined by a constant rate input data stream, and reflected by several parameters such as the average queue length, stationary queue length distribution, packet delay and overflow probability. Both infinite-buffer model and finite-buffer model are considered. Taking advantage of the memoryless property of the service provided by the channel in each block in the the low SNR (signal-to-noise ratio) regime, the information transmission over the channel is formulated as a \textit{discrete time discrete state} $D/G/1$ queueing problem. The obtained results show that block fading channels are unable to support a data rate close to their ergodic capacity, no matter how long the buffer is, even seen from the application layer. For the finite-buffer model, the overflow probability is derived with explicit expression, and is shown to decrease exponentially when buffer size is increased, even when the buffer size is very small.
\end{abstract}

\begin{keywords}
block Rayleigh fading channel, channels service, buffer-aided communications, queueing analysis, queue length distribution, packet delay, overflow probability.
\end{keywords}

\section{Introduction}
In the pioneering work of Shannon in 1948 \cite{Shannon-MTC}, the capacity of a band limited AWGN (additive white Gaussian noise) channel with an average transmit power constraint was given, which specifies the fundamental transmission limit of AWGN chanels,
\begin{equation}
    C=W\ln\left(1+\frac{P}{N} \right),
\end{equation}
where $W$ is the limited bandwidth and $P/N$ is the SNR (signal-to-noise ratio).

This means that by sufficiently involved encoding systems, although the transmitted signal is perturbed by the random noise, we can still transmit digits at the rate of $C$ {nats} per second for the given SNR, with arbitrarily small frequency of errors.
%Furthermore, it is not possible to transmit at a higher rate by any encoding system without a definite positive frequency of errors. Thus, this transmission limit tells us what is feasible while others are not.
Later, many capacity approaching coding schemes such as Turbo codes \cite{BGT-turbo} and LDPC codes \cite{Gallerger-LDPC} were developed. In this way, the channels (especially wired ones) can be abstracted into a data pipeline, which delivers data from one place to another at a constant rate, regardless of its contents. Therefore, AWGN channels can be described by the single parameter of data rate. This separability of data rate and other parameters such as delay and overflow probability is also the basis of the OSI (Open Systems Interconnection) model \cite{Z-OSI,W-OSI_tele}, which is widely used in internet/telecommunication systems.

However, due to the fading property of wireless channels, this is very different for wireless communications. In fact, the communication capability of fading channels has also been investigated for more than 70 years, from physical layer to cross-layer ways.

\subsection{Physical layer evaluations of fading channels: Capacity}

Usually, wireless communications suffer from random channel gains besides the additive Gaussian noise. Therefore, the corresponding instantaneous capacity is also a random variable. Furthermore, the wireless channel capacity also depends on what is known about the channel gain $h$ (CSI, channel state information) at the transmitter and receiver. Particularly, the coherent channel model assumes perfect CSI. On the contrary, the noncoherent model not only assumes no CSI available, but also implicitly ignores the correlation between the fading coefficients
over time, since the effect of this correlation will be dwindled by the asymptotic analysis at low SNR (signal-to-noise) \cite{Lizhong-CH_low}. For the coherent channel, transmitter and receiver CSI allows the transmitter to adjust both its power and rate according to the time varying channel gain. In this case, the \textit{ergodic capacity} is defined as
\begin{equation}
    C=I(X;Y,S)=\max\limits_{P(h):\int P(h)p(h)dh=\overline{P}} \textsf{E}_h W\ln\left(1+ \frac{h^2P(h)}{N}\right)
\end{equation}
where $\overline{P}$ is the average power, $P(h)$ is the power allocation policy,  $p(h)$ is the channel gain distribution and $\textsf{E}_h$ is the expectation operator on $h$. According to \cite{GS-Throughput}, the optimal solution is the water-filling power allocation over channel states.

When CSI is only available at the receiver, the transmitter can not perform power adaption. In this case, the transmitted signal is independent from channel gain and the ergodic capacity becomes
\begin{equation}
\begin{split}
    C=&I(X;Y,S)=I(X;S)+I(X;Y|S)
    =I(X;Y|S)\\
    =&\textsf{E}_h W\ln\left(1+ \frac{h^2\overline{P}}{N}\right).
\end{split}
\end{equation}

Besides, its channel capacity can only be given by a multi-character definition  as \cite{SWHL-LSC, DSLS-NCFC}
\begin{equation}
    C=\lim_{n\rightarrow\infty}\frac{1}{n} \sup\limits_{p(x)} I(X^n;Y^n),
\end{equation}
where $p(x)$ is the distribution of the input to the transmitter.

On the one hand, in order to approach ergodic capacity, the code lengths must be long enough to average over the fading process, which is difficult to implement in practice. On the other hand, egodic capacity is the inherent characteristic of the channel and helps understand the performance limits of wireless communications.

However, a transmission strategy for ergodic capacity must incorporate every state of the fading process. Thus those very poor channel states will typically reduce ergodic capacity greatly. To this end, an alternative capacity definition for fading channels with receiver CSI is the \textit{outage capacity}, which specifies the maximum mutual
information rate that can be transmitted in every channel realization except a subset whose probability is less than $\epsilon$ \cite{GS-Throughput, ADTse-OutageClsnr, LA-Outage},
\begin{equation}
    C=\max\limits_{R}\left\{R:~\Pr\left\{W\ln\left(1+\overline{P}/N\right)<R\right\}<\epsilon\right\}.
\end{equation}

It is clear that $\epsilon$-outage capacity considers mainly on the worst cases and on the reliability in the communication. So there are wastes of channel capability in those very good channel conditions.

As the extension of the AWGN channel to fading channels, the concepts as listed above present some descriptions of fading channels using average data rate from the physical layer.

However, wireless networks have been widely deployed for a variety of purposes,
where QoS (quality of service) sensitive applications are increasingly of interest.
Each application has its own specified requirement on throughput, reliability, delay, and delivery ratio. Therefore, only using a long term throughput like ergodic capacity/outage capacity to characterize fading channels is not enough.

In this paper, we consider such a problem of \textit{what kind of service fading channels can provide} and will characterize the transmission capability of a buffer-aided block Rayleigh fading channel from multiple aspects such as data rate, packet delay, stationary queue length distribution and overflow probability.

\subsection{Cross layer evaluations: EC and queueing formulations}
High demands on the QoS, especially when packet delay is the main concern, have fueled substantial research interest in modeling fading channels from upper layers. In \cite{Zorzi-PacketDrop}, the statistics of the packet dropping process due to buffer overflow was investigated. Overflow probability and delay were also studied in \cite{Lee-PacketLoss}, where data arrives in a on-off process and suffers from independent Bernoulli channel errors. The tail distribution of packet delay was studied under both round-robin scheduling and multiuser diversity scheduling algorithms in \cite{Ishizaki-QueingDelay}. Besides, cross-layer joint-adaptation mechanisms
which optimize the physical-layer adaptive modulation and coding and the link-layer
packet fragmentation was proposed in \cite{zhang-JointAMC}. In the paper, more realistic correlated fading channels were discussed using the finite state Markov Chain models (FSMC).

Besides above performance analysis, more fundamental investigations on the transmitting limits of fading channels are also needed from a cross layer perspective. In fact, in face of time varying property of fading channels and the booming variety of traffics in the system, physical layer capacity is no longer a sufficient characterization of the transmission capability of fading channels. On the contrary, the problem what kind of service fading channels can provide can only be answered in a cross layer way.

Particularly, the authors in \cite{DN-EC} proposed a link layer channel model termed effective capacity (EC) as a unified framework to consider parameters such as traffic rate and packet delay bound violations. Using the moment generating method and large deviation theory, two EC functions were established, namely the non-empty buffer probability $\gamma^{(c)}(\mu)$ and decaying exponent $\theta^{(c)}(\mu)$. To summarize, the EC link model aims to characterize wireless channels in terms of functions that can be easily mapped to link-level QoS metrics. However, this channel model involves a lot of inevitable approximation, which restricted its application. %Later, to facilitate the application of EC theory, the authors of \cite{QDP-EC-N, QDP-EC-R} derived the close form of EC function for both correlated Rayleigh and Nakagamim fading channels with the consideration of Doppler effect. The corresponding error analysis of the measurement-based estimation algorithm of the EC function was also presented. Particularly, the mathematics basis of EC is network calculus and queueing theory.

On the other hand, wireless communications and queueing theory are connected with each other naturally. Firstly, due to the fluctuation of the instantaneous channel capacity, a buffer must be used at the transmitter to match the source traffic with the channel transmission capability in the engineering practice, which is a typical queueing problem. In theory analysis, Gallager has used queueing theory in the collision resolution problem in multi-access channels \cite{Queue-MAC85} in the mid 1980s. Later, in his work with Telatar \cite{Queue-infor95}, they used tools from both queueing theory and information theory to exhibit aspects from collision resolution approach and capacity bounds of a multi-access problom. More recently, Gallager and Berry discussed the optimal power allocation policy under some delay constraints in a finite-buffer aided point-to-point wireless communication system \cite{RG-CoF}. Particularly, two important questions were asked in the conclusion part of the paper. First, due to the fading character of wireless channels, the transmission rate of fading channels depends greatly on the link layer power allocation and rate allocation schemes, which makes the boundary between physical layer and link layer unclear. Therefore, \textit{whether the layered communication protocol design in wired-line communications is reasonable for the wireless situations} is an important problem. For this question, it has been partially answered in \cite{DWF-Deterministic-TWC}, where it is proved that {\em i.i.d.} fading channels can support a constant rate data stream, just like those wired communication links. Therefore, the separation between physical layer and network layer is reasonable in the {i.i.d.} situations. The second problem is \textit{what is the overflow probability of a finite buffer aided wireless link when the arrival rate is constant}. In this paper, this question will be answered for the case of block rayleigh fading channels in the low SNR regime.

In brief, to characterize what kind of service fading channels can provide in a cross-layer way,  the channel can be examined by a constant rate data input to the buffer, and described from multi aspects such as queue length distribution, packet delay, and overflow probability.

However, there are some challenges in applying queueing theory to this problem. For a block fading channel, its channel gain varies block by block. Particularly, its channel gain is a positive real number ranging from zero to infinity. Therefore, the input/output process has to be modeled as a \textit{discrete time continuous state} Markov process, for which few references exist. So one has to resort to other techniques such as stochastic process as in \cite{DF_Overflowrate} or transform the continuous state space to discrete state space by quantization \cite{DF_Discrete}.

\subsection{Overview of this paper}
This paper focuses on characterizing the transmission capability of block fading Rayleigh channels in the low SNR regime, in a cross-layer way.

Although insufficient, previous results in the framework of information theory still serves as an irreplaceable guidance for this work. Therefore, transmission rate is one of the most important parameter for fading channels. In addition, several QoS parameters are evaluated at the same time.

On the other hand, related cross-layer analysis and studies are usually asymptotic results when buffer size is very large. Therefore, this paper focuses on more detailed and explicit characterizations of fading channel transmission capability.

Specifically,  the transmission capability of a fading channels is examined by a constant rate input data stream, and represented in terms of packet delay, queue length distribution and overflow probability.

It is assumed that data arrives at the buffer in packets. Although it is well known that it can be modeled as a Markov process and the overflow probability decreases exponentially, the following challenges exist:

\begin{enumerate}
  \item [1)] The buffer-aided transmission over fading channels corresponds to a \textit{discrete-time continuous-state} Markov process, since the channel gain is continuously distributed while the transmission over the time is in blocks, on which little is known.
  \item [2)] Known results on the overflow probability are asymptotic ones by assuming the buffer size to be very large, explicit expressions are non-available either.
\end{enumerate}

However, in this paper, the data transmission over a block fading channel is formulated successfully as a \textit{discrete-time and discrete-state} queueing problem (Markov Chain), which applies to both small and large buffers. The key idea behind this formulation is the \textit{memoryless property of exponential distributions}. In fact, the service provided by a Rayleigh fading channel in one block is negative exponentially distributed in the low SNR regime. In this situation, although part of the service capability of the block has been consumed by a previous packet (or a part of the packet), the remaining service capability of this block still follows the same distribution as itself, which can be seen as the service provided by a brand new block. In this way, the service time of a packet can be seen as the integer part of its actual service time. Thus, the state space, i.e., the queue length at discrete epochs (the beginning of each block) will be discrete. That is, the original discrete time continuous state queueing process is transformed in to a more trackable discrete time discrete state Markov chain.

With respect to the existing literature, the main contributions
of this paper can be summarized as follows.

\begin{enumerate}
   \item [\textit{a)}] A trackable Markov Chain was established to model the information transmission over block Rayleigh fading channels in the low SNR regime.

   \item [\textit{b)}] For the infinite-buffer model, the stationary distributions was obtained. Closed form of average packet delay was presented, as a function of input data rate.

   \item [\textit{c)}] For the finite-buffer model, the stationary queue length distribution and the average packet delay were derived.

   \item [\textit{d)}] For a given buffer size, the overflow probability is given in an explicit expression, no matter the buffer size is small or large.
\end{enumerate}

The rest of this paper is organized as follows. The communication system model is presented in Section \ref{sec:2} and the corresponding queueing model is formulated in Section \ref{sec:3}. After that, the infinite-buffer model is studied, where the stationary queue length distribution and packet delay is obtained in Section \ref{sec:4}. In Section \ref{sec:5}, the finite-buffer model is investigated and its stationary queue length distribution, packet delay and overflow probability are derived on the basis of results on infinite-buffer model. The obtained result will also be presented via numerical results in Section \ref{sec:6}.  Finally, we will conclude our work in section \ref{sec:7}.

\section{System Model}\label{sec:2}
Consider a point-to-point communication over a block-fading
Rayleigh channel with additive white Gaussian noise \cite{OSW-Itc}. In such channels, the channel gain is modeled by a Rayleigh distributed random variable, which is stationary and ergodic. The channel gain stays fixed over each block of a certain amount of channel uses, and varies independently in different blocks. Let $T_B$ be the block length, $h_n$ be the time varying channel gain during the {\em n}-th block and $\gamma_n=h^2_n$ be the corresponding power gain. Then $h_n$ is Rayleigh distributed and $\gamma_n$ is exponentially distributed, whose probability density functions (\textit{p.d.f.}) are given by, respectively,
\begin{equation}\label{eq:rayle}
    \begin{split}
        p_h(x)=\frac{1}{\sigma^2}e^{\frac{-x^2}{2\sigma^2}} \quad\textrm{and}\quad
        p_\gamma(x)=\frac{1}{2\sigma^2}e^{\frac{-x}{2\sigma^2}},
    \end{split}
\end{equation}
for $x>0$, where $h_n$ is independent of any other $h_{n'}$ for $n\neq n'$,
and so is $\gamma_n$. In fact, $h=\sqrt{r_I^2+r_Q^2}$, where $r_I$
and $r_Q$ are independent Gaussian random variables with zero mean and
variance $\sigma^2$.

The transmission model considered in this paper follows the model given in \cite{RG-CoF}, as shown in Fig.~\ref{fig:TXm}. Assume that the higher layer application maintains a data stream with constant rate $R$. Since the service provided in each block is a random variable, we have to adjust the actual transmission rate over the fading channel
according to the channel states. Therefore, a First In First Out (FIFO) buffer
is used at the transmitter side to match the source traffic stream
with the channel service in each block. Assume that the data are served by packets and the packet size equals to the traffic arriving at the buffer in each block, i.e., $L_p=RT_B$.  Let $Q(n)$ be the queue length of the buffer in \textit{packets} at the start of block $n$. {Note that $Q(n)$ is not necessarily an integer at discrete times of $n^+$. We use $D(n)$ to denote the number of blocks that the packet arriving in block $n$ will spend in the queue.
\begin{figure}[!t]
\centering
\includegraphics[width=7.7cm]{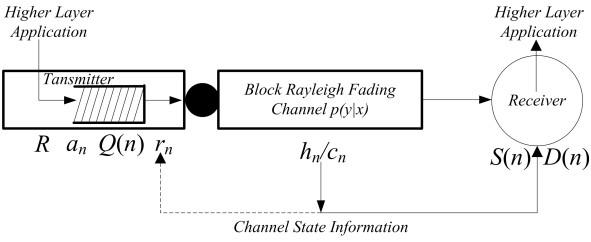}
\caption{The transmission system model} \label{fig:TXm}
\end{figure}

Let $P$ denote the transmit power, $W$ is the transmit
bandwidth and $N_0$ is the noise power spectral density. Let $\alpha$
and $d$ denote the path loss exponent and the distance between the
transmitter and receiver, respectively. Then the instantaneous
capacity of the block Rayleigh fading channel in $nats$ is
\begin{equation}\label{df:insC}
c_n=W\ln\left(1+\frac{\gamma_nPd^{-\alpha}}{WN_0}\right),
\end{equation}
and the service provided by the fading channel in one block is $s_n=c_nT_B$.

Following the concept of channel service in \cite{DWF-Deterministic-TWC}, which combines dimensions of data rate and time, define $S_k$ as the total amount of service provided by the fading channel in $k$ successive blocks, we have
\begin{equation}\label{df:Sk}
S_k=\sum_{m=1}^k s_m.
\end{equation}

The average received SNR is defined as $\rho=\frac{2\sigma^2P}{WN_0d^\alpha}$.

\section{Queueing Model Formulation: Infinite Buffer Size Case}\label{sec:3}
As is shown, the channel service in each block is a random variable of
positive real numbers. Therefore, the queue size $Q(n)$ can not be guaranteed an integer when counting at some fixed epochs such as the beginning of each blocks $n^+$. Therefore, the queueing process is a \textit{discrete time continuous state} Markov process. Unfortunately, few results are available for such processes. However, some transformations are performed on the queueing process in this section, and finally, we will construct a discrete time discrete state $D/G/1$ queue when the average SNR is low. Particularly, the buffer size is assumed to be infinite long.

\subsection{The service time of each packet}
Firstly, the cumulative distribution function (CDF) of the instantaneous capacity $c_n$
is given by
\begin{equation}\label{eq:F_cn_x_orig}
    F_c(x)=1-e^{\frac{-1}{\rho}\left(e^\frac{x}{W}-1\right)}.
\end{equation}

In the low SNR scenario considered in this paper, especially for the wide band communications, we have the following equation holds
\begin{equation}\label{eq:F_cn_x}
    F_c(x)=1-e^{\frac{-x}{W\rho}}.
\end{equation}
\begin{figure}
\centering
\includegraphics[width=3.3in]{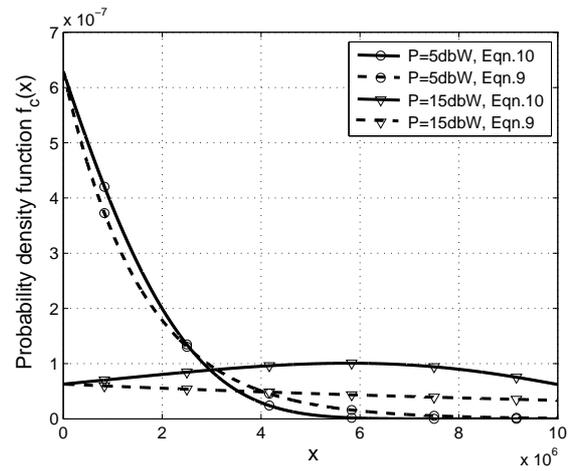}
\caption{The approximation of the {\em p.d.f.} of channel service of each block for middle SNR case}\label{fig:pdf}
\end{figure}

\begin{remark}
It should be noted that, if the average SNR is not so small, or equivalently the transmit power is not very small, the {\em p.d.f.} of channel service in each block is no longer negative exponential distribution and the approximation (\ref{eq:F_cn_x}) of (\ref{eq:F_cn_x_orig}) will not be accurate anymore, as shown by the example in Fig. \ref{fig:pdf}. Therefore, in the case of middle/high SNR or non-Rayleigh fading analysis, the memoryless property of the service provided in each block won't hold any more. As a result, we have to resort to other techniques such as state space quantization or stochastic analysis in \cite{DF_Discrete}.
\end{remark}

Therefore, in the low SNR regime, the CDF of the service in one block ($s_n$) will be given by
\begin{equation}\label{eq:F_sn_x}
    F_s(x)=1-e^{\frac{-x}{\nu}},
\end{equation}
which is an negative exponential distribution, where $\nu$ is defined as
\begin{equation}\label{df:nu}
    \nu=WT_B\rho.
\end{equation}

It can be seen that the service of $k$ successive blocks ($S_k$) follows the Gamma distribution, whose {\em p.d.f.}
is given by
\begin{equation}\label{eq:pdf_Sk}
    f_{S}(x)=\frac{1}{\Gamma(k) \nu^k}x^{k-1}e^{\frac{-x}{\nu}},
\end{equation}
where $\Gamma(k)=\int_0^\infty e^{-t}t^{k-1}dt$ is the Gamma function.

\begin{definition}
    The \textit{service time} $T_n$ of a packet is the number of \textit{complete blocks} of the period in which the service of the packet is finished.
\end{definition}

Therefore, $T_n$ is a non-negative integer random variable. Firstly, the probability that a packet is served within one block can be derived as follows,
\begin{equation}\label{dr:p_0}
    p_0=\Pr\{T_n=0\}=\Pr\{ s_n>L_p \} =e^{\frac{-L_p}{\nu}}\stackrel{(\triangle)}{=}e^{-\theta},
\end{equation}
where $\theta=\frac{L_p}{\nu}$.

\begin{remark}
    We say that the service time of a packet is zero if it is served within one block, for which the detailed explanations are listed as follows. For the block in which the packet is served, although part of its service capability has been consumed, the remaining amount of service follows the same probability distribution as that provided by a complete block, according to the memoryless property of negative exponential distribution. So it can continue to serve the next packet like another complete block. In this sense, the service time of this packet is zero.
\end{remark}

\begin{remark}
    Besides, the ergodic capacity reduces to $C_e=W\rho$ in the low SNR regime. So we have $\theta=\frac{L_p}{\nu}=\frac{tT_B}{WT_B\rho}=\frac{R}{C_e}$. Thus, $\theta$ is the ratio between traffic rate and ergodic capacity. Secondly, from the queueing aspect, $\nu$ is the average amount of service provided in each block. Then $\theta$ is the average number of blocks needed for one packet, which is also the average service completion interval. Finally, $\theta$ can further be interpreted as $\frac{L_p}{\nu}/1$, which is the ratio between the average service time and the average arrival interval, which is often referred as to the \textit{traffic load}.
\end{remark}

Similarly, with $F_s(x)$, $f_{S_k}(x)$ and recall that $S_{k+1}=S_k+s_{k+1}$, the probability that the service time of a packet is $k$ ($k\geq1$) blocks can be obtained as
\begin{equation}\label{dr:p_k}
\begin{split}
    p_k=&\Pr\{T=k\}\stackrel{(a)}{=}\Pr\{S_{k}\leq L_p, S_{k+1}> L_p\}\\
    \stackrel{(b)}{=}&\int_0^{L_p} f_{S_{k}}(x_{k})dx_{k} \int_{L_p-x_{k}}^\infty f_{s}(s_{k+1})ds_{k+1}\\
    =&\frac{1}{k!}e^{\frac{-L_p}{\nu}}\left(\frac{L_p}{\nu}\right)^k
    =\frac{1}{k!}e^{-\theta}\theta^k,
\end{split}
\end{equation}
where (b) holds because $S_k$ and $s_{k+1}$ are independent from each other. Most importantly, (a) is assumed to hold, also in the sense of the memoryless property of negative exponential distributions, which is formally given as follows.

    If $X$ is an negative exponentially distributed random variable, then $X$ is memoryless, namely,
    \begin{equation}
        \Pr\{X\leq s+t|X>s\}=\Pr\{X\leq t\}.
    \end{equation}

Therefore, when we say that the service time of a packet is $T_n=k$, the packet is actually completed in $k+1$ blocks,as shown in (\ref{dr:p_k}.a). However, only a part of the service capability of block $n+1$ is used. So it can continue its service for the next packet in the queue. As is shown by (\ref{eq:F_sn_x}), the channel service of each block in the low SNR regime follows the negative exponential distribution, which is memoryless. Therefore, the remaining service capability of the {\em n}+1-th block can be seen as the service that can be provided by a new block. In this sense, the packet uses only $k$ blocks. Similarly, if a packet is completed within one block as in (\ref{dr:p_0}), it is defined that its service time is zero, i.e., $T_n=0$.

\subsection{The Markov Chain model}

In this way, the problem of information transmission over a block fading Rayleigh channel can be transformed into a classical discrete time $D/G/1$ queueing problem. Its arrival process is $\{A_n=1,n\geq1\}$ (unit: $L_p$) and its service time $T_n$ (unite: block) is a Poisson distributed random variable, whose probability generating function (PGF) is given by
\begin{equation}\label{dr:pgf_t}
\begin{split}
    G(z)=&\textsf{E}[z^{T_n}]
    =e^{-\theta}+\sum_{k=1}^\infty \frac{1}{k!}e^{-\theta} (\theta z)^k
    =e^{\theta(z-1)}.
\end{split}
\end{equation}

According to the relationship of the moments and PGFs, we have
\begin{equation}\label{rt:E_T}
    \textsf{E}[T_n]=G'(t)|_{t=0}=\theta.
\end{equation}

\begin{remark}
    The arrival process considered is constant and one packet arrives in each block. Then the expected number of packets arriving during a service time also equals to $\theta$. Throughout the paper, it is assumed that $\theta<1$ so that the queue is stable.
\end{remark}

Up to now, the information transmission over the channel is formulated as a late arrival system queueing model with immediate access, in which each packet arrives at the end of a block ($k^-$) and leaves at the beginning of the block ($k^+$). The service of each packet is also assumed to start at the beginning of a block. If the service time of a packet is zero, i.e., $T_n=0$, it is assumed to leave the queue immediately at the beginning of the block following its arrival.

Let $Q_n^+$ be the number of packets in the queue at the time of $n^+$. Particularly, $n^+$ may not necessarily a departure epoch of a packet. In this case, $n^+$ is actually lies in the service time of a certain packet. Then the remaining service time of the packet after $n^+$ is a random variable and will determine the value of $Q_{n+1}^+$ together with the value the channel service in block $n+1$. That is to say, the value of $Q_{n+1}^+$ is determined not only by $Q_n^+$ but also by how many of the current packet has been served before $n^+$. Therefore, the arbitrary time queue length process $\{Q_n^+,n\geq0\}$ is usually not a Markov chain.

Define $\tau_n$ be the departure epoch of packet $n$. Assume that the packet will leave the buffer at the same time as packet $n-1$ if its service time is zero, i.e., $T_n=0$ and $\tau_n=\tau_{n-1}$. Therefore, $\tau_n^+$ is the aftereffectless time of the queue length process. Let $L_n^+=Q(\tau_n^+)$ be the number of packets in the buffer after the departure of the {\em n}-th packet. Then $\{L_n^+,n\geq1\}$ is a Markov chain, which is called the embedded Markov chain (EMC) of queue length process $\{Q_n^+,n\geq0\}$. Since the arrival process is $\{A_k=1, k\geq1\}$, i.e., one packet each block, there will be $T_n$ packet arriving in the service time of the $n$-th packet $T_n$.  Then we have
\begin{equation}\label{df:L_n_+}
    L_{n+1}^+=\left\{ \begin{aligned}
             &L_n^+-1+T_n, & L_n^+\geq1\\
             &T_n          & L_n^+=0.
             \end{aligned} \right.
\end{equation}

The transition probability matrix of $\{L_n^+,n\geq1\}$ will be given by
%\begin{figure*}[!t]
%\hrulefill
%\vspace*{4pt}
\begin{equation}\label{df:P_matri}
    \textbf{P}=\left[
  \begin{array}{ccccccccccccccccccccc}
    p_0 & p_1 & p_2 & p_2  &\cdots \\
    p_0 & p_1 & p_2 & p_2  &\cdots \\
    0   & p_0 & p_1 & p_2  &\cdots \\
    0   & 0   & p_0 & p_2  &\cdots \\
    \vdots & \vdots & \vdots &\vdots & \ddots \\
  \end{array}
\right]
\end{equation}
%\hrulefill
%\vspace*{4pt}
%\end{figure*}

The details of packets arrival and departure of some elements ($p_{00},p_{01},p_{10},p_{11}$) in matrix \textbf{P} are illustrated in Fig. \ref{fig_3} as some examples.
\begin{figure}
\centering
\includegraphics[width=3in]{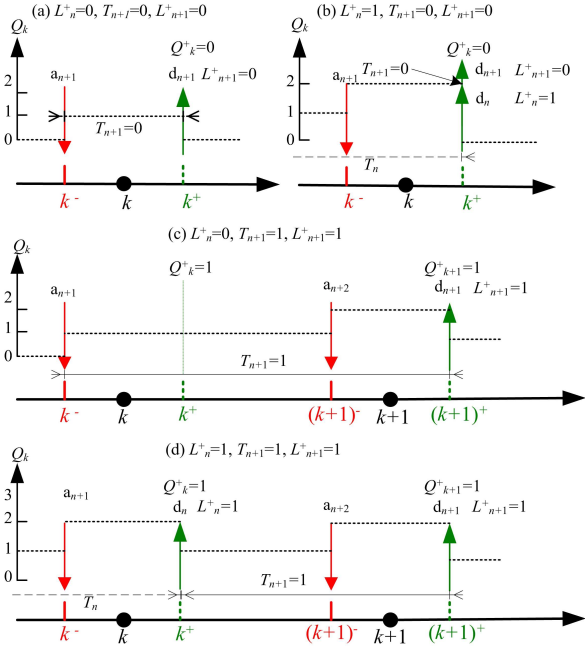}
\caption{Examples for the queueing model}\label{fig_3}
\end{figure}

\begin{disc}
    Poisson Arrivals.

    Suppose the arrival process is a Poisson process with intensity $\lambda$ rather than constant process. Then the number of packets arriving at the buffer in each block will be a random variable of non-negative integer, i.e., $A_k^\lambda\in (-\infty, 0)$. Its PGF is given by
    \begin{equation}
        G_A^\lambda(z)=\textsf{E}[z^{A_k}]=e^{\lambda(z-1)}.
    \end{equation}

    Assume the buffer is empty at the departure of $n$-th packet, the probability that the buffer is still empty at the departure of the $n+1$-th packet will be
    \begin{equation}
        q_0=\sum_{k=0}^\infty \Pr\{ T_n=k \}
                              \Pr\{ \sum_{j=1}^k A_j^\lambda = 0 \}
           =e^{\theta(e^{-\lambda}-1)}.
    \end{equation}

    Similarly, the probability that the queue size is $i$ at the departure of the $n+1$-th packet will be
    \begin{equation}
        q_i=\sum_{k=1}^\infty \Pr\{ T_n=k \}
                              \Pr\{ \sum_{j=1}^k A_j^\lambda = i \}.
    \end{equation}

    Denote $A_{T_n}^\lambda$ be the number of packets arriving in $T_n$ blocks, we know that it is a compound Poisson process, whose PGF is given by
    \begin{equation}
        G_{A_{T_n}}^\lambda=e^{\theta(G_A^\lambda-1)}=e^{\theta \left( e^{\lambda(z-1)} -1\right)}
    \end{equation}

    Therefore, one has
    \begin{equation}
        q_i=\textsf{E}_{T_n}\left[\Pr\{A_{T_n}=i\}\right]=\frac{{G_{A_{T_n}}^\lambda}^{(i)}(0)}{i!}.
    \end{equation}

    In this way, the transition probability matrix of the queue size at departure epochs when arrival process is Poisson is given by:
\begin{equation}\label{df:P_matri}
    \textbf{P}^\lambda=\left[
  \begin{array}{ccccccccccccccccccccc}
    q_0 & q_1 & q_2 & q_2  &\cdots \\
    q_0 & q_1 & q_2 & q_2  &\cdots \\
    0   & q_0 & q_1 & q_2  &\cdots \\
    0   & 0   & q_0 & q_2  &\cdots \\
    \vdots & \vdots & \vdots &\vdots & \ddots \\
  \end{array}
\right]
\end{equation}

The following analysis is focused on the case when the arrival process is constant. Results for Poisson arrival can be obtained by simply replacing $\textbf{P}$ with $\textbf{P}^\lambda$ and are omitted here.
\end{disc}

\section{Queue Length Distribution and Delay Analysis}\label{sec:4}
In this section, we will perform a detailed investigation on the queue length process in terms of stationary queue length distribution and packet delay.

\subsection{The Stationary Queue Length at the Departure Epochs}

Let $L^+=\lim_{n\rightarrow\infty}L_n^+$ be the limit of the queue length process, which is called the \textit{stationary queue length at departure epochs}. Denote
\begin{equation}\label{df:pi_j}
    \pi_j=\Pr\{L^+=j\}=\lim_{n\rightarrow\infty}\Pr\{L_n^+=j\},~~~j\geq0,
\end{equation}
then the vector $\vec{\pi}=\{\pi_0,\pi_1,\cdots\}$ is the stationary queue length at the departure epochs.
\begin{theorem}\label{th:sta_distri_dep}
    If $\theta<1$, the PGF of the stationary queue length at the departure epochs is given by:
    \begin{equation}\label{rt:sta_distri_dep}
        L^+(z)=\frac{ (1-\theta)(1-z) }{ 1-ze^{\theta(1-z)} },
    \end{equation}
    where $\theta=\frac{L_p}{\nu}$.
\end{theorem}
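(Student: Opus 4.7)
The plan is to use the standard embedded Markov chain PGF technique, exploiting the fact that the recursion in (\ref{df:L_n_+}) can be rewritten uniformly as $L_{n+1}^+ = (L_n^+-1)^+ + T_n$, where $(x)^+ = \max(x,0)$ and $T_n$ is independent of $L_n^+$ with Poisson PGF $G(z) = e^{\theta(z-1)}$ by (\ref{dr:pgf_t}). Assuming $\theta<1$ guarantees a negative drift in the recursion and hence positive recurrence of the chain, so the stationary distribution $\vec{\pi}$ exists and $L^+(z) = \sum_{j\ge 0}\pi_j z^j$ is well-defined on $|z|\le 1$.

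In steady state I would take $E[z^{L_{n+1}^+}] = E[z^{L_n^+}] = L^+(z)$ on the left side, and use independence of $T_n$ and $L_n^+$ on the right to factor $E[z^{(L_n^+-1)^+ + T_n}] = G(z)\,E[z^{(L_n^+-1)^+}]$. The key calculation is to rewrite the shifted expectation by splitting on $\{L^+ = 0\}$:
\begin{equation}
E\bigl[z^{(L^+ - 1)^+}\bigr] \;=\; \pi_0 + \sum_{j\ge 1}\pi_j z^{j-1} \;=\; \pi_0 + \frac{L^+(z) - \pi_0}{z}.
\end{equation}
Substituting this into the factored PGF equation and solving the resulting linear equation for $L^+(z)$ yields
\begin{equation}
L^+(z) \;=\; \frac{\pi_0(1-z)\,G(z)}{G(z) - z} \;=\; \frac{\pi_0(1-z)}{1 - z\,e^{\theta(1-z)}},
\end{equation}
after multiplying numerator and denominator by $1/G(z) = e^{\theta(1-z)}$.

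The remaining step is to pin down $\pi_0$ by the normalization $L^+(1) = 1$. Since both numerator and denominator vanish at $z=1$, I would apply L'Hopital's rule: the numerator derivative is $-\pi_0$, while differentiating $1 - z e^{\theta(1-z)}$ and evaluating at $z=1$ yields $-(1-\theta)$. Therefore $L^+(1) = \pi_0/(1-\theta)$, and setting this equal to $1$ gives $\pi_0 = 1-\theta$. Plugging this back delivers the claimed expression (\ref{rt:sta_distri_dep}). As a sanity check, $\pi_0 = 1-\theta$ is consistent with the intuitive server-utilization interpretation already highlighted in the remark after (\ref{rt:E_T}).

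The main obstacle is really bookkeeping rather than any deep difficulty: one must carefully handle the piecewise definition of the recursion when passing to PGFs (the $(\cdot)^+$ operation is what introduces the unknown $\pi_0$ and makes the equation solvable only up to normalization), and one must justify that $\theta<1$ is sufficient for a stationary distribution to exist so that the PGF manipulations are valid. Beyond that, the derivation reduces to the standard M/G/1-type embedded-chain calculation, specialized to the Poisson service distribution that the memoryless property has delivered.
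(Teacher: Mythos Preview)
Your proof is correct; it is the standard embedded Markov chain / Pollaczek--Khinchine derivation specialized to this D/G/1 queue with Poisson-distributed service times, and every step (the uniform rewrite $L_{n+1}^+=(L_n^+-1)^++T_n$, the factoring via independence, the algebra, and the L'H\^opital normalization giving $\pi_0=1-\theta$) checks out. The paper itself does not actually prove Theorem~\ref{th:sta_distri_dep} in-line---it defers the argument to the companion conference paper \cite{Dong-queueingWCSP} and the discrete-queueing text \cite{Tian-disQ}---so there is no in-paper proof to compare against, but the route you take is exactly the classical one those references use.
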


With $L^+(z)$, the average queue length can be obtained by
    \begin{equation}
        \textsf{E}[L^+]=\lim_{z\rightarrow1}{L^+}'(z)=\frac{\theta(2-\theta)}{2(1-\theta)}.
    \end{equation}

The variance of queue length can also be obtained as
\begin{equation}
\begin{split}
    \textsf{Var}[L^+]=&\lim_{z\rightarrow1}\left[{L^+}''(z)+{L^+(z)}'-({L^+(z)}')^2\right]\\
    =&\frac{12\theta-18\theta^2+10\theta^3-\theta^4}{12(1-\theta)^2}.
\end{split}
\end{equation}

While the average queue length determines the buffer size, the standard deviation $\sigma_{L^+}=\sqrt{\textsf{Var}[L^+]}$ is an indicator of the ``spread" of the queue length distribution. For a smaller $\sigma_L$, the queue length tends to be very close to the mean value. For large standard deviations, the queue length is spread out over a wider range of values.

It is known that there is a one-to-one correspondence  between the probability distribution and its PGF. Therefore, the distribution of the queueing process at departure epochs is totally determined by $L^+$, which will be derived later.

Besides, by a similar discussion in \cite{Tian-disQ} on the relationship between the stationary distribution at departure epochs and arbitrary epochs, we have the following ptoposition. Interested readers can refer \cite{Dong-queueingWCSP} for more details, as well as the proof of \textit{Theorem} \ref{th:sta_distri_dep}.

\begin{proposition}\label{th:sta_distri_arbtr}
    The stationary queue length distribution at departure epochs and arbitrary epochs are the same.
\end{proposition}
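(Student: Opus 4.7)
The plan is to apply a level-crossing argument to show that, in steady state, the distribution of queue length seen by an arrival equals the distribution left behind by a departure, and then to exploit the fact that exactly one packet arrives per block to identify the arrival-epoch distribution with the arbitrary-epoch distribution.

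First, I would fix $j\geq0$ and count, over the first $N$ blocks, the number $U_N(j)$ of transitions of the queue length from $j$ to $j+1$ and the number $D_N(j)$ of transitions from $j+1$ to $j$. Each such up-crossing is produced by an arrival that finds exactly $j$ packets in the buffer, and each such down-crossing by a service completion that leaves exactly $j$ packets behind; these counts correspond respectively to the arrival-epoch and departure-epoch occurrences of state $j$. Since the queue length is integer-valued and changes only through unit-size arrivals and one-at-a-time departures, the path imbalance satisfies $|U_N(j)-D_N(j)|\leq 1$. Dividing by $N$, letting $N\to\infty$, and invoking ergodicity (which is guaranteed by $\theta<1$) yields equality of the long-run fraction of arrivals seeing $j$ packets and the long-run fraction of departures leaving $j$ packets behind.

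Second, because arrivals are deterministic with exactly one packet per block, the pre-arrival epoch $(n+1)^-$ of block $n+1$ is separated from the arbitrary epoch $n^+$ only by the interior of block $n$, during which the queue length does not jump. The pre-arrival queue length therefore coincides pathwise with $Q_n^+$, so the arrival-epoch stationary distribution equals the arbitrary-epoch stationary distribution $\{p_j\}$. Combining this identification with the level-crossing identity from the first step gives $p_j=\pi_j$ for every $j\geq 0$, which is the claim.

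The main obstacle will be the bookkeeping in blocks where several packets depart in rapid succession, an event made possible by the memoryless property that produces $T_n=0$. One must verify that a block in which $m$ packets leave contributes exactly one down-crossing at each of the $m$ consecutive levels it traverses, so that no crossing is double counted or skipped. This is consistent with the ordering of departures already built into the embedded Markov chain $\{L_n^+\}$ and with the definition of $\pi_j$ as the long-run fraction of departing packets that leave exactly $j$ packets behind, so the argument closes without further assumptions.
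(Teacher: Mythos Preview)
Your argument is correct. The paper does not actually supply its own proof of this proposition; it simply refers the reader to the discrete-time queueing textbook \cite{Tian-disQ} and to the authors' companion conference paper \cite{Dong-queueingWCSP} for the details. The level-crossing argument you outline---equating arrival-epoch and departure-epoch distributions via $|U_N(j)-D_N(j)|\leq 1$, and then identifying the arrival epochs with the arbitrary epochs $n^+$ because exactly one packet arrives per block and no departure takes place in the interior of a block---is precisely the standard route taken in discrete-time queueing theory, and is almost certainly what the cited textbook contains. Your handling of the one genuine subtlety, the possibility that several packets depart in the same block (consecutive $T_n=0$), is also correct: each such departure is recorded as a separate point of the embedded chain $\{L_n^+\}$ and contributes exactly one down-crossing at its own level, so no crossing is missed or double counted.
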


Therefore, we will have the stationary distribution of the queue length process at arbitrary epochs by \textit{Theorem} \ref{th:sta_distri_dep} and \textit{Proposition} \ref{th:sta_distri_arbtr}.

\subsection{Length of Busy Period}\label{subsec:busy_period}
The \textit{busy period} $B$ is defined as a time interval during which the channel continuously transmitting packets and an \textit{idle period} $I$ is a time interval when the buffer is empty. Thus the system repeats cycles of busy and idle periods.

\begin{proposition}\label{prop:busy-period}
    The PGF and average value of busy period $B$ are given by, respectively
    \begin{equation}
        \begin{split}
            B(z)=\frac{-1}{\theta z} \textsf{W}(-\theta z e^{-\theta}), \quad \textrm{and} \quad
            \textsf{E}[B]=\frac{\theta}{1-\theta},
        \end{split}
    \end{equation}
    where $\textsf{W}(z)$ is the Lanbert\textit{W} function \cite{Lambert_w}. The average idle period is given by $\textsf{E}[I]=\frac{1}{e^\theta-1}$.
\end{proposition}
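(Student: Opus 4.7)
The plan is to exploit the branching-style decomposition that is standard for M/G/1-type busy periods, adapted to the discrete block structure at hand. The key observation is that during the service of the first packet of a busy period, which occupies $T_1$ blocks in the sense established in Section \ref{sec:3}, exactly $T_1$ new packets arrive because the arrival stream is deterministic at one packet per block. Each of these $T_1$ new arrivals can be regarded as the root of an independent, statistically identical sub-busy-period; this is made exact by the memoryless property of the per-block exponential service together with block-wise independence of the channel gains. From this, I would derive the fundamental recursion
\begin{equation}
B \;=\; T_1 + \sum_{i=1}^{T_1} B_i,
\end{equation}
where $\{B_i\}$ are i.i.d. copies of $B$ and $T_1\sim\mathrm{Poisson}(\theta)$ is independent of $\{B_i\}$.

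Taking PGFs on both sides and using independence, I would obtain the functional equation $B(z) = G(zB(z)) = \exp\bigl(\theta(zB(z)-1)\bigr)$, where $G(z)=e^{\theta(z-1)}$ is the PGF of $T_n$ already computed in (\ref{dr:pgf_t}). To extract a closed form, I would substitute $u = -\theta z B(z)$; the equation rearranges algebraically to $u e^{u} = -\theta z e^{-\theta}$, which is precisely the defining identity of the Lambert function $\textsf{W}$. Back-substituting yields the stated $B(z) = -\tfrac{1}{\theta z}\,\textsf{W}(-\theta z e^{-\theta})$. A sanity check $B(1)=1$ follows from the identity $\textsf{W}(-\theta e^{-\theta}) = -\theta$, which holds on the principal branch for $\theta<1$.

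For the mean busy period, I would avoid differentiating the Lambert-$\textsf{W}$ form directly; instead, implicit differentiation of $B(z)=\exp\bigl(\theta(zB(z)-1)\bigr)$ at $z=1$ gives $B'(1) = \theta\bigl(B(1)+B'(1)\bigr) = \theta(1+B'(1))$, so $\textsf{E}[B]=\theta/(1-\theta)$. The mean idle period is treated separately: once the queue empties, subsequent blocks contribute deterministic arrivals with i.i.d.\ Poisson$(\theta)$ service-times, and a block is ``idle'' precisely when its arriving packet has $T_n=0$, an event of probability $p_0=e^{-\theta}$. The idle period is then the number of consecutive $T=0$ arrivals preceding the first $T\geq 1$ arrival, which is geometric with parameter $1-e^{-\theta}$ and mean $e^{-\theta}/(1-e^{-\theta}) = 1/(e^{\theta}-1)$.

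The main conceptual obstacle is not the algebra but the careful justification of the branching decomposition under deterministic arrivals: one must argue that during the sub-busy-period spawned by the $i$-th sibling, exactly $B_i$ arrivals occur and are entirely absorbed into that sub-busy-period, so that sibling sub-busy-periods concatenate without overlap and without losing arrivals to a neighbor. The memoryless property of the per-block exponential service, combined with block-wise independence, is what makes this decomposition exact rather than merely heuristic, since each sub-busy-period sees a statistically fresh sequence of blocks with the same per-block service law.
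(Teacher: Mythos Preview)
Your proposal is correct and follows essentially the same route as the paper: the same branching decomposition $B=T+\sum_{i=1}^{T}B_i$, the same functional equation $B(z)=G(zB(z))$ solved via Lambert $\textsf{W}$, the same implicit differentiation at $z=1$ for $\textsf{E}[B]$, and the same geometric argument for $\textsf{E}[I]$. The only cosmetic difference is that the paper justifies the decomposition by invoking the LCFS discipline (busy-period length is order-independent), whereas you justify it directly through memorylessness and block-wise independence; both lead to the identical recursion.
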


\begin{proof}
Note that the length of a busy period and the number of packets served during a busy period do not depend on the order of service. Thus, we can assume the Last Come First Service (LCFS) discipline. Then there is a recursive relationship in the busy period. That is, the busy period is composed of the service time of the first packet in the busy period, and all the periods generated by packets that arrive during the service time of the first packet. Since one packet arrives in each block, the number of packets arriving in the first service period equals to the length of the service period. We have
\begin{equation}
    B=T+B^1+B^2+\cdots+B^T,
\end{equation}
where $\{B^n,n=1,2,\cdots,T\}$ are mutually independent random variables with the same distribution as $B$, which leads to
\begin{equation}
    \textsf{E}[z^B|T=k]=\textsf{E}[z^{k+B^1+\cdots+B^k}]=[zB(z)]^k.
\end{equation}

According to the whole probability formula, we have
\begin{equation}\label{eq:bzdr}
    \begin{split}
        B(z)=&\sum_{k=0}^\infty \Pr\{T=k\}\textsf{E}[z^B|T=k]\\
        =&\sum_{k=0}^\infty \Pr\{T=k\}[zB(z)]^k\\
        =&G\left[zB(z)\right],
    \end{split}
\end{equation}
where $G(z)$ is the PGF of the service time and given in (\ref{dr:pgf_t}).

Solving $B(z)$ from (\ref{eq:bzdr}), we have $B(z)=\frac{-1}{\theta z} \textsf{W}(-\theta z e^{-\theta})$ for $|z|<1$.

Then the expected value of busys period can be derived by
\begin{equation}\label{rt:E_busy}
    \begin{split}
        &\textsf{E}[B]=B'(z)|_{z=1}\\
        =&G'[zB(z)][B(z)+zB'(z)]|_{z=1}\\
        =&G'[1][1+B'(1)]\\
        =&\theta (1+\textsf{E}[B]),
    \end{split}
\end{equation}
and we get $\textsf{E}[B]=\frac{\theta}{1-\theta}$.

During the idle period, the buffer is continuously empty. Therefore, packets arriving in this period are all served within one block, namely,
\begin{equation}
\begin{split}
    \Pr\{I=k\}=&\Pr\{s_1>L_p,\cdots,s_k>L_p,s_{k+1}<L_p\}\\
    =&p_0^k(1-p_0);
    \end{split}
\end{equation}
where $p_0=\Pr\{s_n>L_p\}=e^{-\theta}$. It is clear that the idle period is a geometrically distributed random variable. Then the average idle period is given by
\begin{equation}
    \textsf{E}[I]=\sum_{k=1}^\infty k\Pr\{I=k\}=\frac{1}{e^\theta-1}.
\end{equation}

\end{proof}

Because the average service time of a packet is $\textsf{E}[T]=\theta$, the number of packets served in a busy period can be obtained as $\textsf{E}[\Gamma]=\frac{\textsf{E}[B]}{\textsf{E}[T]}=\frac{1}{1-\theta}$.

It can be seen that $\textsf{E}[B]$ is increasing while $\textsf{E}[I]$ is decreasing with $\theta$. The existence of $\textsf{E}[B]$ and $\textsf{E}[I]$ is a result of the fluctuation of the instantaneous channel capacity, which is the key difficulty of evaluating and using fading channels, and is unavoidable. Particularly, larger $\textsf{E}[B]$ means higher channel usage efficiency but larger packet delay. On the contrary, larger $\textsf{E}[I]$ means wastes of channel and less delay. They can not be optimized at the same time. But some trade-offs can be made according to the specified application scenario.

\subsection{Packet Delay}\label{subsec:delay}

The \textit{packet delay} $D$ is defined as the time interval between the arrival of a packet and its departure. Firstly, the packet delay consists of a service time $T$. Secondly, if the packet arrives seeing a non-empty buffer, it must wait for a \textit{waiting time} $W$ for its service. Finally, for the formulation in this paper, there is another piece of time that the packet spends in the system, i.e., the \textit{vestige time} $V$. In this paper, $T=k~(k=0,1,\cdots)$ means that the service of a packet is not finished until the $k+1$-th block. According to the memoryless property of the negative exponential distribution, although part of the service ability has been consumed, it is considered as a brand new block. However, the packet still has to spend a part of that block in the system, which is called the \textit{vestige time}. Thus, we know that
\begin{equation}\label{df:delay}
    D=T+W+V.
\end{equation}

\begin{theorem}\label{th:delay}
    In the FIFO discipline, the average packet delay is given by
        \begin{equation}\label{rt:E_D}
            \textsf{E}[D]=\frac{1}{2}+\theta+\frac{\theta^2}{2(1-\theta)}+\int_0^1 (x-1) e^{\frac{-\theta}{x}} dx.
        \end{equation}
\end{theorem}

\begin{proof}
    By the formulation of the queueing model, it can be seen that the vestige time $V$ is independent of service time $T$ and waiting time $W$. Therefore, they can be obtained separately. See Appendix \ref{prf:them3} for the details of the proof.
\end{proof}

\textit{Theorem} \ref{th:delay} shows the relationship between average packet delay and traffic load $\theta$ (equivalently the input data rate $R$).
As known to all, ergodic capacity is the expected transmission rate averaged over all fading states, which is considered as the limiting transmission rate of a fading channel when physical layer coding delay is allowed to be large enough. Particularly, it was proved in \cite{DWF-Deterministic-TWC} that, \textit{i.i.d.} fading channels can actually support a constant data stream at the rate of ergodic capacity without queueing delay.
However, it is seen from \textit{Theorem} \ref{th:delay} that as $\theta$ approaching 1, the average packet delay will go to infinity. This means that the block fading channel has difficulty in supporting a traffic rate near its ergodic capacity, even when the buffer size is infinite and considered from the application layer, and neglecting its physical layer coding delay.

\section{Queueing Model Formulation: Finite-Buffer Case}\label{sec:5}
In the practical engineering, the size of a buffer must be finite. Thus it is more useful to investigate the finite buffer-aided communications.

Denote the buffer size as $K$. If a packet arrives at the buffer when the queue length is $K$, an overflow happens. In this section, we are interested in the overflow probability. In this case, the ${(K+1)\times(K+1)}$ transition probability matrix of queue length process at the departure time $\{\widehat{L}_n^+,n\geq1\}$ is given by
%\begin{figure*}[!t]
%\hrulefill
%\vspace*{4pt}
\begin{equation}\label{df:P_matri finite}
    \widehat{\textbf{P}}=\left[
  \begin{array}{ccccccccccccccccccccc}
    p_0 & p_1 & p_2 & p_3  &\cdots & p_{k-1} & \widehat{p}_K\\
    p_0 & p_1 & p_2 & p_3  &\cdots & p_{k-1} & \widehat{p}_K\\
    0   & p_0 & p_1 & p_2  &\cdots & p_{k-2} & \widehat{p}_{K-1}\\
    0   & 0   & p_0 & p_1  &\cdots & p_{k-3} & \widehat{p}_{K-2}\\
    \vdots & \vdots & \vdots &\ddots & \ddots & \ddots & \vdots\\
    0   & 0   & 0   & 0    &\cdots & p_{1} & \widehat{p}_{2}\\
    0   & 0   & 0   & 0    &\cdots & p_{0} & \widehat{p}_{1}\\
  \end{array}
\right]
\end{equation}
%\hrulefill
%\vspace*{4pt}
%\end{figure*}
where $\widehat{p}_k=\Pr\{T_n\geq k\}=1-\sum_{j=0}^k\frac{\theta^j}{j!}e^{-\theta}$ is the probability that the number of arrival packets (equal to the service time) is equal or more than $k$ during a service of current packet.

\subsection{Stationary Queue Length Distribution}
Firstly, the stationary distribution of the queue length process for the infinite-buffer model can be obtained with its PGF (\ref{rt:sta_distri_dep}) in \textit{Theorem} \ref{th:sta_distri_dep}. It is noted that $\pi_0=1-\theta$.

Before the discussion on the overflow probability, let's introduce one lemma that will be used.
\begin{lemma}\label{lem:causys_highorder}
    \textit{Cauchy Integral Formula} (\textit{extended}) \cite{Xijiao-Fubian}\\ Let $C$ be a simple closed positively oriented piecewise smooth curve on a domain $D$, and let the function $f(z)$ be analytic in a neighborhood of $C$ and its interior. Then for every $z_0$ in the interior of $C$ and every natural number $n$, we have that $f^{(n)}(z)$ is $n$-times differentiable at $z_0$ and its derivative is
    \begin{equation}\label{eq:cauchy's_inti}
        f^{(n)}(z_0)=\frac{n!}{2\pi i}\oint_C \frac{f(z)}{(z-z_0)^{n+1}} dz,~n=1,2,\cdots.
    \end{equation}
\end{lemma}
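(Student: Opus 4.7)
The plan is to prove the extended Cauchy Integral Formula by induction on $n$, with the standard Cauchy Integral Formula ($n=0$ case) taken as the base. Recall that the base case, $f(z_0) = \frac{1}{2\pi i}\oint_C \frac{f(z)}{z-z_0}\,dz$, follows from the homotopy invariance of contour integrals of analytic functions together with a direct evaluation on a small circle centered at $z_0$.

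For the inductive step, assume the formula holds for derivatives up to order $n-1$. I would form the difference quotient
\begin{equation*}
\frac{f^{(n-1)}(z_0+h) - f^{(n-1)}(z_0)}{h}
= \frac{(n-1)!}{2\pi i h}\oint_C f(z)\left[\frac{1}{(z-z_0-h)^n} - \frac{1}{(z-z_0)^n}\right]dz,
\end{equation*}
expand the bracketed expression algebraically using the identity $a^{-n} - b^{-n} = (b-a)\sum_{k=0}^{n-1}a^{-(k+1)}b^{-(n-k)}$, cancel the factor of $h$, and then let $h \to 0$. The pointwise limit of the integrand is $n\,f(z)/(z-z_0)^{n+1}$, which, after combining the constants, yields exactly the claimed formula. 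This simultaneously establishes both the existence of $f^{(n)}(z_0)$ and the stated integral representation.

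The main obstacle will be justifying the interchange of limit and integral when passing $h \to 0$. For this I would exploit that $C$ is compact and $z_0$ lies in the interior: there is a positive distance $\delta = \mathrm{dist}(z_0,C) > 0$, so restricting to $|h|<\delta/2$ guarantees $|z-z_0-h|\geq \delta/2$ uniformly for $z\in C$. Combined with the uniform bound $|f(z)|\leq M$ on $C$ (by continuity on a compact set), the integrand family is uniformly bounded and in fact uniformly convergent, so dominated convergence (or uniform convergence on the compact curve $C$) legitimizes the passage to the limit.

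As a sanity check, one can alternatively derive the result by differentiating under the integral sign directly in the Cauchy formula, repeatedly differentiating $(z-z_0)^{-1}$ with respect to the parameter $z_0$; the same uniform estimates on $C$ justify differentiation under the integral, and each differentiation brings down a factor producing the $n!$ in the numerator and raising the denominator to $(z-z_0)^{n+1}$. Either route converges on the stated conclusion.
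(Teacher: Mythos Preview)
Your proof is correct and is the standard textbook argument for the Cauchy integral formula for higher derivatives: induction on $n$ starting from the base Cauchy formula, forming the difference quotient, expanding the integrand, and justifying the limit passage via uniform bounds on the compact contour $C$. The alternative route you mention---differentiation under the integral sign with respect to the parameter $z_0$---is equally valid and in fact more commonly presented.

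However, you should note that the paper does not supply its own proof of this lemma at all. It is stated with a citation to a complex analysis textbook \cite{Xijiao-Fubian} and used purely as an imported tool in the derivation of the stationary distribution $\pi_k$ that follows (equation~(\ref{dr:pi_infty1})). So there is no ``paper's proof'' to compare against: your write-up simply fills in what the paper takes for granted. If your goal is to match the paper's treatment, a one-line reference to a standard source would suffice; if your goal is completeness, what you have written is fine.
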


Let $C$ and $C'$ are both circles centered on the origin with their radiuses $r<1$ and $1<r'<\frac{-1}{\theta}W_{-1}(-\theta e^{-\theta})$.

Since $L^+(z)$ is convergent within the unit circle, for $k \geq 1$, the stationary can be expressed by
\begin{equation}\label{dr:pi_infty}
\begin{split}
    \pi_k=&\frac{1}{2\pi i}\oint_{C} L^+(z) \frac{dz}{z^{k+1}}
    =\frac{1-\theta}{2\pi i} \oint_{C} \frac{1-z}{1-ze^{\theta(1-z)}} \frac{dz}{z^{k+1}}.
\end{split}
\end{equation}

It can be seen that function $g(z)=\frac{1-z}{1-ze^{\theta(1-z)}} \frac{1}{z^{k+1}}$ has two singular points within circle $C'$, namely $z=1$ and $z=\frac{-1}{\theta}\textsf{W}_{-1}(-\theta e^{-\theta})$, where $\textsf{W}_{-1}(z)$ is the lower branch Lambert W function. Particularly, $z=1$ is one movable singularity since $\lim_{z\rightarrow1}g(z)=1$ is finite. Therefore, within circle $C'$, $g(z)$ can be considered as analytic. According to Cauchy-Goursat's theory, the integral of $g(z)$ along any closed curve in $C'$ is a invariable. Then we have

\begin{equation}\label{dr:pi_infty1}
\begin{split}
    &\pi_k=\frac{1-\theta}{2\pi i} \oint_{C'} \frac{1-z}{1-ze^{\theta(1-z)}} \frac{dz}{z^{k+1}}\\
    =&\frac{1-\theta}{2\pi i} \oint_{C'} \frac{(z-1)\frac{e^{\theta(z-1)}}{z}  }{\frac{e^{\theta(z-1)}}{z}-1} \frac{dz}{z^{k+1}}\\
    \stackrel{(a)}{=}&\frac{1-\theta}{2\pi i} \oint_{C'} (z-1)\frac{e^{\theta(z-1)}}{z} \sum_{j=0}^\infty \left(\frac{e^{\theta(z-1)}}{z}\right)^j \frac{dz}{z^{k+1}}\\
    =&\frac{1-\theta}{2\pi i} \sum_{j=1}^\infty \oint_{C'} e^{j\theta (z-1)} \left( \frac{1}{z^{k+j}} - \frac{1}{z^{k+j+1}} \right) dz\\
    \stackrel{(b)}{=}& (1-\theta) \sum_{j=1}^\infty \left[ \frac{1}{(k+j-1)!}\left( e^{j\theta(z-1)} \right)^{(k+j-1)}
     \left.\frac{1}{(k+j)!}\left( e^{j\theta(z-1)} \right)^{(k+j)} \right]\right|_{z=0}\\
    =&(1-\theta) \sum_{j=1}^\infty \left[ \frac{1}{(k+j-1)!}(j\theta)^{k+j-1} - \frac{1}{(k+j)!}(j\theta)^{k+j}\right] e^{-j\theta}
\end{split}
\end{equation}
where (a) follows $\frac{1}{1-z}=\sum_{j=0}^\infty z^j$ for $|z|<1$ and the fact that $\left|\frac{e^{\theta(z-1)}}{z}\right|<1$ on circle $C'$, (b) follows \textit{Lemma} \ref{lem:causys_highorder} and $e^{j\theta(z-1)}$ is analytic.

To insure the the following general item hold for all $k\geq0$, it is defined that $\varphi_{-1}=1$. Besides, define for $k\geq0$
\begin{equation}\label{rt:fai_0}
    \varphi_k=(1-\theta)\sum_{j=1}^\infty \frac{1}{(k+j)!}(j\theta)^{k+j} e^{-j\theta}
\end{equation}
Then we have
\begin{equation}\label{rt:pij}
    \pi_k=\varphi_{k-1}-\varphi_k.
\end{equation}

Specifically, $\varphi_0=\theta$. A simple proof is given as follows. Denote $x=-\theta e^{-\theta}$ for convenience. Remember the Lambert W function can be written as $\textsf{W}_0(x)=\sum_0^{+\infty} \frac{(-n)^{n-1}}{n!}x^n$ and $\textsf{W}_0'(x)=\frac{\textsf{W}_0(x)}{x(1+\textsf{W}_0(x))}$ holds \cite{Lambert_w}. Then we have
\begin{equation}\label{dr:thta1}
\begin{split}
    \varphi_0=&(1-\theta)\sum_{j=1}^\infty \frac{1}{j!}(j\theta)^j e^{-j\theta}
    =(1-\theta)\sum_{j=1}^\infty \frac{(-jx)^j}{j!}\\
    =&-x(1-\theta)\sum_{j=1}^\infty \frac{(-jx)^{j-1}}{j!}\cdot jx^{k-1}
    =-x(1-\theta)\textsf{W}_0'(x)\\
    =&-x(1-\theta)\frac{\textsf{W}_0(x)}{x(1+\textsf{W}_0(x))}
    =\theta e^{-\theta}(1-\theta)\frac{-\theta}{-\theta e^{-\theta}(1-\theta)}\\
    =&\theta.
    \end{split}
\end{equation}

However, when we take the finite sum of (\ref{rt:fai_0}) to approximate $\varphi_0$, it needs more items for larger $\theta$ than smaller ones, as shown in Fig. \ref{fig_4}.
\begin{figure}
\centering
\includegraphics[width=3.3in]{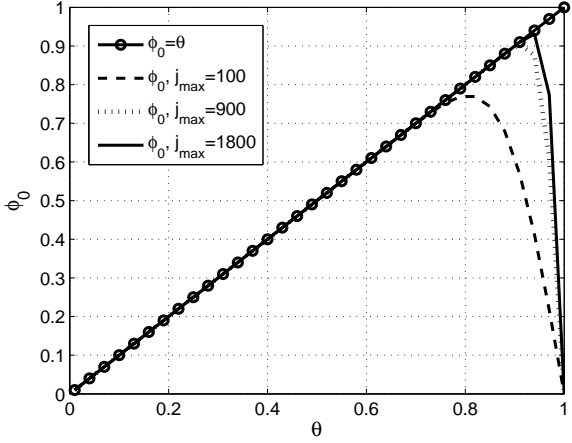}
\caption{The approximation of $\varphi_0$ by finite sums}\label{fig_4}
\end{figure}

On the other hand, $\Pr\{L^+>K\}$ gives the probability that the queue length is over $K$ in the infinite-buffer model, which can be seen as a \textit{virtual overflow probability} of the infinite buffer model. Define the virtual overflow probability for a given $K$ as
\begin{equation}\label{df:poverf_infity}
\begin{split}
    P_{overflow}^\infty=&\Pr\{L^+>K\}=\sum_{j=K+1}^\infty \pi_j=\varphi_K\\
    =&(1-\theta) \sum_{j=1}^\infty\frac{1}{(K+j)!}(j\theta)^{K+j}e^{-j\theta}.
\end{split}
\end{equation}

Next, define $\vec{\pi}^K=\{\pi_0^K,\pi_1^K,\cdots,\pi_K^K\}$ as the stationary distribution of the queue length process for the finite-buffer model, which is given by the following theorem.

\begin{theorem}\label{th:pi_piK}
    The stationary distribution of the queue length process for the finite-buffer model is given by
    \begin{equation}
            \begin{aligned}
            &\pi_j^K=\frac{\varphi_{j-1}-\varphi_j}{1-\theta \varphi_K}, &j=0,1,\cdots,K-1,\\
            &\pi_K^K=\frac{\varphi_{K-1}-\theta\varphi_K}{1-\theta \varphi_K}, &j=K,
        \end{aligned}
    \end{equation}
    where $\varphi_{-1}=1$ and $\varphi_k$ is given by (\ref{rt:fai_0}) for $k\geq0$.
\end{theorem}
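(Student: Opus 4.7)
The plan is to exploit the strong structural similarity between the finite-buffer transition matrix $\widehat{\mathbf{P}}$ in (\ref{df:P_matri finite}) and the infinite-buffer matrix $\mathbf{P}$ in (\ref{df:P_matri}): the first $K$ columns of every row coincide entry-for-entry, and only the last column differs, in that it aggregates the tail probabilities $\widehat p_m = \sum_{k\ge m} p_k$ in place of the individual $p_k$. The natural strategy is to write the stationary condition $\vec\pi^K = \vec\pi^K\widehat{\mathbf{P}}$ componentwise and compare the resulting system with the already-solved balance equations that underlie Theorem \ref{th:sta_distri_dep}, where $\pi_j = \varphi_{j-1}-\varphi_j$.

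First I would look at the balance equations for $j = 0, 1, \dots, K-1$. These involve only the entries $p_0, \ldots, p_{K-1}$ and are therefore identical in form to those satisfied by $\pi_j$ in the infinite model. Starting from $\pi_0^K(1-p_0) = p_0\pi_1^K$, which is the exact analogue of the corresponding infinite-chain relation, and propagating forward by induction on $j$, one concludes that $\pi_j^K$ is proportional to $\pi_j$ for every $j = 0,1,\dots,K-1$. Writing this common ratio as a constant $c$ gives $\pi_j^K = c(\varphi_{j-1}-\varphi_j)$ for $j<K$, and reduces the whole problem to determining the two scalars $c$ and $\pi_K^K$.

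For these two remaining unknowns I have two equations: the balance at the boundary state $K$,
\begin{equation*}
\pi_K^K = \widehat p_K(\pi_0^K+\pi_1^K) + \sum_{i=2}^{K}\widehat p_{K-i+1}\,\pi_i^K,
\end{equation*}
and the normalization $\sum_{j=0}^K\pi_j^K = 1$, which after telescoping reduces to $c(1-\varphi_{K-1}) + \pi_K^K = 1$. The essential ingredient for handling the boundary balance is a closed-form expression for the weighted tail-sum $\widehat p_K(\pi_0+\pi_1)+\sum_{i=2}^{K}\widehat p_{K-i+1}\pi_i$, which I would obtain by summing the infinite-chain balance equations over $j\ge K$ and invoking the identities $\sum_{j\ge K}\pi_j=\varphi_{K-1}$ and $\sum_{j>K}\pi_j=\varphi_K$. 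Once this tail-sum has been rewritten in $\varphi_{K-1}$ and $\varphi_K$, substituting $\pi_j^K = c\pi_j$ for $j<K$ converts the $K$-th balance into a single linear relation between $c$ and $\pi_K^K$; solving it jointly with the normalization yields the two quantities in the form asserted.

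The main obstacle will be the bookkeeping at the boundary: the aggregated tail weights $\widehat p_m$ combine contributions that, in the infinite chain, are spread across all states $\ge K$, and these contributions must be correctly sorted between the portion absorbed by $\pi_K^K$ and the portion that goes into the normalizing constant $c$. Tracking the appearance of $\theta$ (through $\widehat{p}_1 = 1-e^{-\theta}$ and $\mathsf{E}[T_n]=\theta$) inside this boundary balance is what produces the denominator $1-\theta\varphi_K$ rather than a naive $1-\varphi_K$; once the relevant identities are in place, the remaining algebra is routine and isolates the formulas stated.
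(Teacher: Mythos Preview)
Your plan has a genuine gap: the balance equations of $\widehat{\mathbf P}$ that you propose to solve determine the stationary law of the chain $\{\widehat L_n^+\}$ at \emph{departure} epochs, whereas the $\pi_j^K$ of Theorem~\ref{th:pi_piK} is the \emph{block-epoch} (time-average) distribution, as the proof makes explicit via $\pi_j^K = \textsf{E}[N_j^K]/\textsf{E}[N^K]$. In the infinite-buffer case these two laws coincide (Proposition~\ref{th:sta_distri_arbtr}), but in the finite-buffer case they do not: rejected packets still consume a block but never generate a departure, so a regeneration cycle contains strictly more blocks than departures. If you actually carry your computation through, the balance equation at column $K-1$ already involves $\pi_K^K$ (via the entry $\widehat P_{K,K-1}=p_0$) and forces $\pi_K^K = c\,\pi_K$; the column-$K$ balance is then redundant, and normalization alone gives $c = 1/(1-\varphi_K)$, not $1/(1-\theta\varphi_K)$. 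For a quick sanity check take $K=1$: the two rows of $\widehat{\mathbf P}$ are identical, so its stationary vector is $(e^{-\theta},\,1-e^{-\theta})$, which does not agree with the theorem's $(\tfrac{1-\theta}{1-\theta\varphi_1},\,\tfrac{\theta(1-\varphi_1)}{1-\theta\varphi_1})$.

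The factor $\theta$ you are trying to coax out of the boundary balance does not live in $\widehat{\mathbf P}$ at all; it enters through the conversion between departures and blocks. In the paper's regenerative argument this is the identity $\textsf{E}[N]=\textsf{E}[N^K]+\theta\sum_{j\ge K}\textsf{E}[S_j]$: each ``extra'' departure at level $\ge K$ in the infinite model costs on average $\theta=\textsf{E}[T_n]$ additional blocks, and dividing through yields $\eta = 1/(1-\theta\varphi_K)$. To repair your approach you would need to relate the departure-epoch law you compute to the block-epoch law the theorem asserts, which is essentially the same cycle-counting step the paper performs.
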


\begin{proof}
    Define a \textit{cycle} as the time interval between two consecutive service completions leaving an empty buffer behind. According to regenerative process theory, the statistical properties in one cycle is the same as that of the entire process when the process has been in stationary state.

    For the convenience of the proof, let's give some definitions for the finite-buffer queue model, which are well defined under the assumption $\theta<1$.

    \begin{description}
      %\item[$q_{jd}^K$] The probability that there are $j$ packets left in the buffer at the departure of the packet $(j=0,1,\cdots,K)$.
      \item[$N^K$] The number of blocks in one cycle.
      \item[$N_j^K$] The number of blocks in one cycle at whose beginning there are $j$ packets in the buffer $(j=0,1,\cdots,K)$.
      \item[$S^K$] The number of packets served in one cycle.
      \item[$S_j^K$] The number of packets completions that there are $j$ packets left in the buffer at its departure $(j=0,1,\cdots,K-1)$. It is clear that $S_0^K=1$.
      %\item[$\vec{\pi}^K$] The stationary queue length distribution at departure epochs. $\vec{\pi}^K= \{\vec{\pi}_0^K,\vec{\pi}_1^K,\cdots,\vec{\pi}_K^K\}$.
      \item[$\mu_j^K$] The expected number of blocks until the next packets completion when a packet has been just completed with $j$ packets left behind.
    \end{description}

    The corresponding quantities for the infinite-buffer model are denote by $N$, $N_j$, $S$, $S_j$ and $\mu_j$, respectively.

    By the theory of regenerative process, we have $\pi_j^K=\frac{\textsf{E}[N_j^K]}{\textsf{E}[N^K]}$ and $\pi_j=\frac{\textsf{E}[N_j]}{\textsf{E}[N]}$.

    Similarly, the transitions among states $\{0,1,\cdots,K-1\}$ are the same for both finite and infinite-buffer models. Besides, the service of each block is independent from each other. Thus the transitions among states above level $K$ do not affect the number of downcrossing to any state $j\leq K-1$ during one cycle in the infinite-buffer model. Note that this is not valid for $j=K$. Therefore,
    \begin{equation}\label{eq:N_j}
        \textsf{E}[N_j^K]=\textsf{E}[N_j],~j=0,1,\cdots,K-1.
    \end{equation}

    Let $\eta=\frac{\textsf{E}[N]}{\textsf{E}[N^K]}$, then we have
    \begin{equation}\label{eq:pipiK}
        \pi_j^K=\eta \pi_j,~~~j=0,1,\cdots,K-1.
    \end{equation}

    On the other hand, we have $\mu_j^K=\mu_j$ for $0\leq j\leq K-1$ and $\mu_j=\theta$ for $k\geq1$. Particularly, for $j=0$, there is some idle time besides the service time. We also have $\textsf{E}[S_j^K]=\textsf{E}[S_j]$ for $0\leq j\leq K-1$. By the definition of $N^K$ and $N$, their expected value are given by
    \begin{equation}
        \begin{split}
            \textsf{E}[N^K]=&\mu_0^K+\sum_{j=1}^{K-1}\textsf{E}[S_j^K]\mu_j^K,\\
            \textsf{E}[N]=&\mu_0+\sum_{j=1}^\infty\textsf{E}[S_j]\mu_j.
        \end{split}
    \end{equation}

    Using the equations above, it follows that
    \begin{equation}\label{eq:EN_ENK}
        \textsf{E}[N]=\textsf{E}[N^K]+\sum_{j=K}^\infty\textsf{E}[S_j]\theta.
    \end{equation}

    According to the definition of $\eta$, we have $\textsf{E}[N]/\eta=\textsf{E}[N^K]$ holds. Dividing (\ref{eq:EN_ENK}) by $\textsf{E}[N]/\eta$ and $\textsf{E}[N^K]$ on both sides, respectively, we have
    \begin{equation}\label{dr:eta_final}
        \eta=1 + \eta\theta\sum_{j=K}^\infty\frac{\textsf{E}[S_j]}{\textsf{E}[N]}.
    \end{equation}

    By the theory of regenerative process, the statistical property of one cycle converges to that of the whole process and  we have $\pi_j=\frac{\textsf{E}[S_j]}{\textsf{E}[S]}$. Since there is only one packet arrives in each block, it is clear that the length of a cycle equals to the number of packets served during the cycle. So we have $\textsf{E}[N]=\textsf{E}[S]$. Combining these discussions, we know that
    \begin{equation}
        \pi_j=\frac{\textsf{E}[S_j]} {\textsf{E}[N]}.
    \end{equation}
    With this equation and solving $\eta$ from (\ref{dr:eta_final}) one can get
    \begin{equation}\label{rt:eta}
        \eta=\frac{1}{1-\theta\sum_{j=K}^\infty \pi_j}=\frac{1}{1-\theta \varphi_K}.
    \end{equation}

    Combining (\ref{rt:pij}), (\ref{df:poverf_infity}), (\ref{eq:pipiK}) and (\ref{rt:eta}), \textit{Theorem} \ref{th:pi_piK} is proved.

\end{proof}

\subsection{Packet Delay}
\begin{proposition}
    For the finite-buffer model and the FIFO discipline, the average packet delay (the period from the arrival of a packet to its departure) is given by
    \begin{equation}
        \textsf{E}[\widehat D]=\frac{1}{2}+ \frac{\sum_{j=0}^{K-1} \varphi_j - K\theta \varphi_K}{1-\theta \varphi_K}+\int_0^1 (x+1) e^{\frac{-\theta}{x}} dx,
    \end{equation}
    where $\varphi_{-1}=1$ and $\varphi_k$ is given by (\ref{rt:fai_0}) for $k\geq0$.
\end{proposition}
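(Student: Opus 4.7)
The plan is to adapt the proof of Theorem~\ref{th:delay} (infinite-buffer packet delay) to the finite-buffer model, using the decomposition $\widehat D = \widehat T + \widehat W + \widehat V$ for an accepted packet, and evaluating each term against the stationary distribution $\vec{\pi}^K$ established in Theorem~\ref{th:pi_piK}. Because the per-block channel service retains its memoryless exponential law independently of whether the buffer is finite, the structural scaffolding of the infinite-buffer argument transfers over; what changes is the stationary distribution that the waiting and vestige times are averaged against.

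My first step is to extract the middle term of the claimed formula directly from $\vec{\pi}^K$ by computing the mean queue length
\[
\textsf{E}[\widehat L]=\sum_{j=0}^{K}j\,\pi_j^K=\frac{1}{1-\theta\varphi_K}\left[\sum_{j=1}^{K-1}j(\varphi_{j-1}-\varphi_j)+K(\varphi_{K-1}-\theta\varphi_K)\right].
\]
Telescoping the interior sum yields $\sum_{j=1}^{K-1}j(\varphi_{j-1}-\varphi_j)=\sum_{j=0}^{K-2}\varphi_j-(K-1)\varphi_{K-1}$, and combining with the $j=K$ boundary term collapses the expression to
\[
\textsf{E}[\widehat L]=\frac{\sum_{j=0}^{K-1}\varphi_j-K\theta\varphi_K}{1-\theta\varphi_K},
\]
which is exactly the middle term appearing in the proposition.

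Next, I would identify the contribution $\textsf{E}[\widehat W]+\textsf{E}[\widehat T]$ with $\textsf{E}[\widehat L]+\tfrac{1}{2}$, mirroring the argument that produces the $E[L^+]+\tfrac{1}{2}$ block in Theorem~\ref{th:delay}. The intuition is that, by the memoryless property and FIFO discipline, an accepted packet arriving at $k^-$ must wait for all packets it sees ahead (a mean of $\textsf{E}[\widehat L]$ packets, each contributing mean residual service $\theta$), while the constant $\tfrac{1}{2}$ comes from the half-block offset between the $k^-$ arrival convention and the $k^+$ start-of-service convention. This step uses a discrete-time Little's-law-style identity, combined with an analogue of Proposition~\ref{th:sta_distri_arbtr} for the truncated chain of Theorem~\ref{th:pi_piK} that equates the arrival-epoch and arbitrary-epoch distributions. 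Finally, the vestige contribution $\textsf{E}[\widehat V]$ is obtained by conditioning on the fraction $x\in(0,1)$ of the last block consumed by the packet's service and integrating against the exponential channel density; since the per-block statistics are unaffected by buffer size, the integrand is the same as in the infinite-buffer computation, producing $\int_0^1(x+1)e^{-\theta/x}\,dx$ after the appropriate boundary contributions are absorbed.

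The main obstacle will be the second step: in the presence of blocking at level $K$, the effective arrival rate into the server is $1-\pi_K^{K,\mathrm{arr}}$ rather than $1$, so a naive application of Little's law would introduce an unwanted factor of $1/(1-\pi_K^{K,\mathrm{arr}})$. Showing that this factor does not appear in the delay formula requires establishing that the arrival-epoch and departure-epoch distributions coincide on $\{0,\dots,K-1\}$, and then invoking the regenerative-cycle relations already used in the proof of Theorem~\ref{th:pi_piK} to convert cycle averages into per-accepted-packet averages. Once that bookkeeping is resolved, summing $\textsf{E}[\widehat T]+\textsf{E}[\widehat W]+\textsf{E}[\widehat V]$ reproduces the claimed expression.
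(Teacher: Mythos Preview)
Your high-level strategy is the same as the paper's: compute $\textsf{E}[\widehat L^+]$ from the stationary distribution of Theorem~\ref{th:pi_piK}, invoke Little's law to convert it into the sojourn time $\textsf{E}[\widehat T_Q]$, and then append the vestige time borrowed from the infinite-buffer calculation. Your telescoping computation of $\textsf{E}[\widehat L^+]$ matches the paper's exactly.

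Where you diverge from the paper is in the bookkeeping of the constant $\tfrac{1}{2}$ and the vestige integral. In the paper's argument (both here and in Theorem~\ref{th:delay}), Little's law is applied with arrival rate $1$ to give $\textsf{E}[\widehat T_Q]=\textsf{E}[\widehat L^+]$ \emph{without} any $+\tfrac{1}{2}$ correction; the $\tfrac{1}{2}$ is entirely a byproduct of the vestige computation, namely $\textsf{E}[V]=\tfrac{1}{2}+\int_0^1(x-1)e^{-\theta/x}\,dx$ from Appendix~\ref{prf:them3}. Your attribution of the $\tfrac{1}{2}$ to a ``half-block offset between the $k^-$ arrival convention and the $k^+$ start-of-service convention'' is not how the paper derives it, and the claim that Theorem~\ref{th:delay} produces an $\textsf{E}[L^+]+\tfrac{1}{2}$ block for $\textsf{E}[W]+\textsf{E}[T]$ is a misreading of that proof. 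The paper also handles the vestige term more bluntly than you do: it simply declares $\textsf{E}[\widehat V]\approx\textsf{E}[V]$ as an approximation ``especially for large $K$,'' rather than rederiving it.

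Your worry about the blocking correction to Little's law (the factor $1/(1-P_{\mathrm{overflow}})$) is legitimate and in fact more careful than the paper, which applies Little's law with rate $1$ and does not address the issue at all. So the regenerative-cycle machinery you propose for that step is additional rigor that the paper does not supply.
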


\begin{proof}
Since the stationary queue length distribution has been given in Theorem \ref{th:pi_piK}, the average queue length is given by
\begin{equation}
\begin{split}
    &\textsf{E}[\widehat{L}^+]=\sum_{j=0}^K j  \pi_j^K\\
    =&\sum_{j=1}^{K-1} \frac{j(\varphi_{j-1}-\varphi_j)}{1-\theta \varphi_K} + K \left( 1-\sum_{j=0}^{K-1} \frac{\varphi_{j-1}-\varphi_j}{1-\theta \varphi_K} \right)\\
    =&\frac{1}{1-\theta \varphi_K}\left(\sum_{j=0}^{K-1} \varphi_j -K\varphi_{K-1}\right) +K - \frac{K(1-\varphi_{K-1})}{1-\theta \varphi_K} \\
    =&\frac{\sum_{j=0}^{K-1} \varphi_j - K\theta \varphi_K}{1-\theta \varphi_K}
\end{split}
\end{equation}

Since there is only one packet arrives in each block, the average time that a packet stays in the system (including its service time and waiting time) equals to the average queue length by the Little's law,
\begin{equation}
    \textsf{E}[\widehat T_Q]=\textsf{E}[\widehat{L}^+].
\end{equation}

On the other hand, the vestige time of the finite-buffer model won't change much compared with the infinite-buffer model, especially for large $K$. Thus, the average vestige time is approximated by $\textsf{E}[\widehat V]=\textsf{E}[V]$, which is given by (\ref{rt:E_V}).

Finally, the total average delay in the finite-buffer model is
\begin{equation}
    \textsf{E}[\widehat D]=\textsf{E}[\widehat T_Q]+\textsf{E}[\widehat V]
\end{equation}
which completes the proof.
\end{proof}

\subsection{Overflow probability}
Define the overflow probability $P_{overflow}$ of a finite size buffer as the long-run fraction of packets that are rejected due to the finite capacity of the buffer. This is also the probability that the buffer is in the overflow state at packet arriving epochs. Following the queueing analysis method, we will investigate the overflow probability of a finite size buffer in this section. Motivated by the result in \cite{Tijms-Fubian}, the overflow probability is obtained based on the results on the infinite-buffer model, and is given by the following theorem.

\begin{theorem}\label{lem:poverf_fandinf}
    Assume the buffer size is $K$ (unit: $L_p$) and the overflow probability is $P_{overflow}$. For the queue model considered in this paper (one packet arrives each block), the overflow probability of the finite-buffer aided communication over block Rayleigh fading channel in the low SNR regime can be given by
    \begin{equation}\label{rt:po}
        P_{overflow}=\frac{(1-\theta) \varphi_K}{1-\theta \varphi_K},
    \end{equation}
    where $\theta=\frac{L_p}{\nu}=\frac{L_p}{WT_B\rho}$ and $\varphi_K$ is given in (\ref{rt:fai_0}).

\end{theorem}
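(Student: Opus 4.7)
The plan is to prove this theorem by combining a flow-conservation identity with the stationary queue-length distribution obtained in Theorem \ref{th:pi_piK}. Since exactly one packet arrives per block, in steady state the long-run throughput (packets departed per block) equals the effective admission rate $1 - P_{overflow}$. So it suffices to compute the throughput by an independent route and then solve for $P_{overflow}$.

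To compute the throughput, I would decompose it as (fraction of blocks during which the buffer is non-empty) $\times$ (average number of packets served per block given the buffer is non-empty). For the first factor, Theorem \ref{th:pi_piK} together with the fact that $\pi_0 = 1 - \theta$ in the infinite-buffer model (which follows either from $L^+(0) = 1-\theta$ in Theorem \ref{th:sta_distri_dep} or from the identity $\pi_0 = \varphi_{-1} - \varphi_0 = 1 - \theta$) yields $\pi_0^K = (1-\theta)/(1 - \theta \varphi_K)$. For the second factor, I would invoke Proposition \ref{prop:busy-period}: the expected busy-period length is $\textsf{E}[B] = \theta/(1-\theta)$ and the expected number of packets served in a busy period is $\textsf{E}[\Gamma] = 1/(1-\theta)$, so during busy time the channel serves $\textsf{E}[\Gamma]/\textsf{E}[B] = 1/\theta$ packets per block on average. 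This rate depends only on the per-block service statistics ($\nu$) and the packet size ($L_p$) through $1/\theta = \nu/L_p$, and should therefore be unchanged by imposing a buffer cap. Combining the two factors, throughput $= (1-\pi_0^K)/\theta$, so
\begin{equation*}
P_{overflow} = 1 - \frac{1 - \pi_0^K}{\theta} = \frac{\theta - 1 + \pi_0^K}{\theta} = \frac{(1-\theta)\varphi_K}{1 - \theta \varphi_K},
\end{equation*}
after routine algebra.

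The main obstacle will be rigorously justifying that the average service rate during busy blocks is still $1/\theta$ in the finite-buffer system. Intuitively this is true because the buffer cap only restricts arrivals, not the service dynamics, and the memoryless property of the exponential per-block service guarantees that partial service capability carries over cleanly across packet and block boundaries. Making this precise likely parallels the regenerative-cycle argument used in the proof of Theorem \ref{th:pi_piK}: identify the expected time spent in each state $j=0,1,\ldots,K-1$ across a cycle in the two models (where $\textsf{E}[N_j^K] = \textsf{E}[N_j]$), use $\mu_j = \theta$ for $j \geq 1$, and translate the extra time spent above level $K$ in the infinite-buffer cycle into lost arrivals in the finite-buffer cycle. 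This bookkeeping yields the throughput identity, from which the overflow formula follows.
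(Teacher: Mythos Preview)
Your proposal is correct and follows essentially the same route as the paper: the central identity is the flow-balance equation $(1-P_{overflow})\theta = 1-\pi_0^K$, after which one plugs in $\pi_0^K=(1-\theta)/(1-\theta\varphi_K)$ from Theorem~\ref{th:pi_piK} and solves. The paper states this balance directly (service time demanded per block equals fraction of busy blocks) rather than factoring throughput as $(1-\pi_0^K)\times(1/\theta)$ via the busy-period statistics of Proposition~\ref{prop:busy-period}, but the two derivations are the same argument in slightly different dress, and your concern about justifying the $1/\theta$ rate in the finite-buffer system is handled in the paper only at the same informal level you anticipate.
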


\begin{proof}

    Under the assumption $\theta<1$, both the finite and infinite-buffer models are stable, i.e., the queue will not be always in the overflow state, nor the queue length will go to infinity. In other words, each packet can be served within one block in the average sense, which implies that the service time required by one packet equals to the service time offered by the channel in each block. As we know, the average service time required by each packet is $\theta$, which is less than 1. On the other hand, the channel is busy except a probability of $\pi_0$. So the service time provided by the channel in each block is $1-\pi_0$ in the average sense. Then for the infinite-buffer model, the following equation holds.
    \begin{equation}\label{eq:bta_piK}
        \theta=1-\pi_0.
    \end{equation}

    For the finite-buffer case, a fraction of $P_{overflow}$ packets are discarded. Thus the actual amount of service time required in each block becomes $(1-P_{overflow})\theta$. Therefore, we have
    \begin{equation}\label{eq:bta_pi}
        (1-P_{overflow})\theta=1-\pi_0^K.
    \end{equation}

    Combining (\ref{eq:pipiK}), (\ref{eq:bta_piK}) and (\ref{eq:bta_pi}) and solving $P_{overflow}$, we have
    \begin{equation}\label{dr:p_overf}
        P_{overflow}=\frac{(\eta-1)(1-\theta)}{\theta}.
    \end{equation}

    By replacing the $\eta$ with the result in (\ref{rt:eta}), we finally get the overflow probability as
    \begin{equation}\label{rt:p_overf}
        P_{overflow}=\frac{(1-\theta)\varphi_K}{1-\theta \varphi_K},
    \end{equation}
    which completes the proof.

\end{proof}
\begin{figure}[h]
\centering
\includegraphics[width=3.5in]{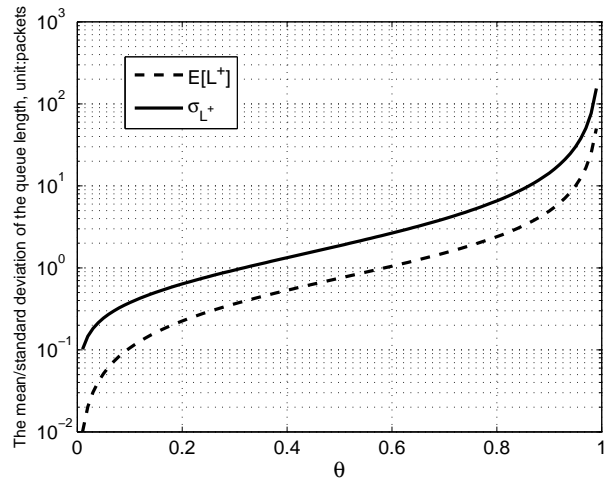}
\caption{The mean/standard deviation of the queue length in the infinite-buffer model} \label{fig:EL}
\end{figure}

\section{Numerical Results}\label{sec:6}
In this section, some numerical results are presented to illustrate the stationary distribution, packet delay, as well as the overflow probability for the finite-buffer model.
The component variance is assumed as to be $\sigma^2=1$, the system bandwidth is $W=5$ MHz and transmitting power is $P=-10$ dBW. Suppose that the distance between the transmitter and the receiver is $d=1000$ m and the pathloss exponent is $\alpha=4$. The block length is chosen as $T_B=10^{-4}$ s.

According to its definition, we have $\theta=\frac{L_p}{WT_B\rho}$.
Remember that $L_p=RT_B$ and the equivalent AWGN capacity of the channel is $C_a=W\rho$ in the low SNR regime. Therefore, $\theta$ can also be seen as the ratio between the traffic rate and the ergodic capacity.
\begin{figure}
\centering
\includegraphics[width=3.5in]{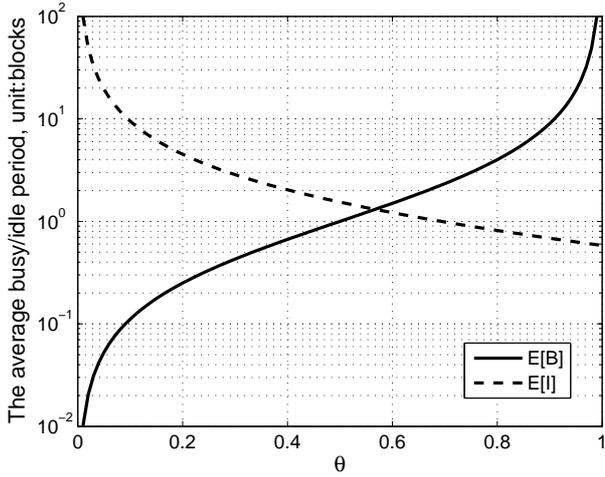}
\caption{The average busy/idle period of the infinite-buffer model} \label{fig:EB}
\end{figure}

\begin{figure}
\centering
\includegraphics[width=3.5in]{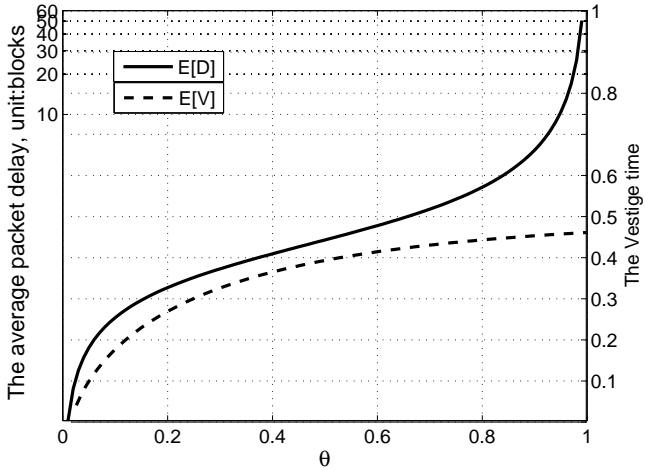}
\caption{The average packet delay of the infinite-buffer model. E[D] corresponds to the log-scale y-axis on the left while E[V] corresponds to the linear y-axis on the right.} \label{fig:ED}
\end{figure}

\begin{figure}
\centering
\includegraphics[width=3.5in]{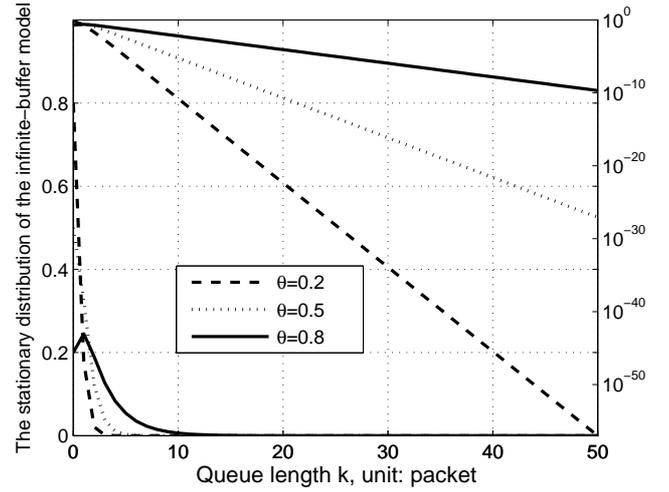}
\caption{The stationary queue length distribution of the infinite-buffer model. Each curve is presented both in linear y-axis on the left and log-scale y-axis on the right.} \label{fig:piinf}
\end{figure}

\begin{figure}
\centering
\includegraphics[width=3.5in]{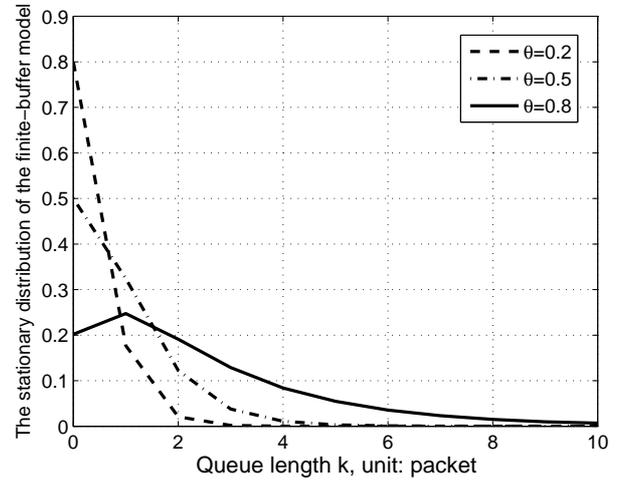}
\caption{The stationary queue length distribution of the finite-buffer model, buffer size $K=10$ packets} \label{fig:piK10}
\end{figure}

\begin{figure}
\centering
\includegraphics[width=3.5in]{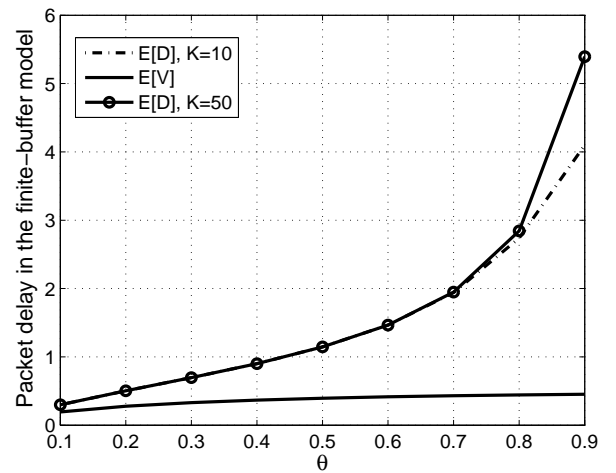}
\caption{The average packet delay (blocks) of the finite-buffer model} \label{fig:EDf}
\end{figure}

\begin{figure}
\centering
\includegraphics[width=3.5in]{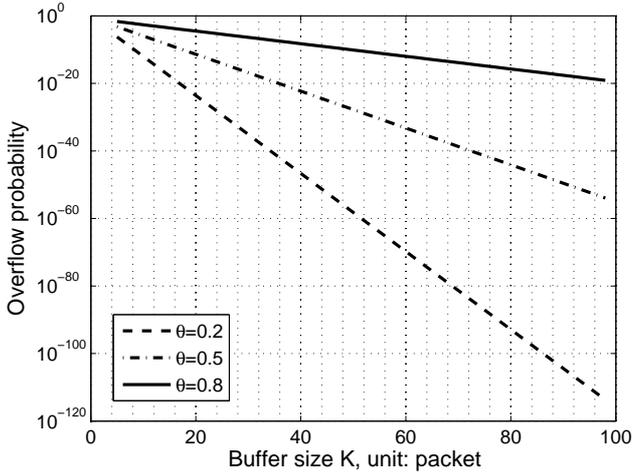}
\caption{The overflow probability versus buffer size (packets)} \label{fig:po}
\end{figure}

For the infinite-buffer model, the mean value as well as the standard deviation of the queue length $L^+$ are presented in Fig. \ref{fig:EL}. When $\theta$ is increased, $\textsf{E}[L^+]$ and $\sigma_L$ increases accordingly. Particularly, they will go to infinity when $\theta$ approaches 1. This means that, blocks fading channels can not support a data stream at a rate close to its egodic capacity. Its busy period and idle period are given by (\ref{rt:E_busy}) and shown in Fig. \ref{fig:EB}. It is seen that $\textsf{E}[B]$ increases while $\textsf{E}[I]$ decreases quickly with $\theta$, which means that the utilization of channel is improved. However, the average packet delay will increase quickly at the same time, as shown in Fig. \ref{fig:ED}. Particularly, the average delay corresponds to the log-scale y-axis on the left and the average vestige time corresponds to the linear y-axis on the right, both in blocks. It is seen that the average vestige time $\textsf{E}[V]$ is also increasing with $\theta$ but never exceeds 0.5 blocks.

The stationary queue length distribution of the infinite-buffer model is shown in Fig. \ref{fig:piinf} for $\theta=0.2,0.5,0.8$, where they are presented in both linear and log-scale y-axis. It is clear that the larger $\theta$ is, the slower $\pi_k$ decreases. This is easy to understand because heavier traffic leads to longer queues. Most importantly, as the queue length  grows, the probability that it occurs decreases approximately exponentially.

When the buffer size is finite, the stationary queue length distribution is slightly changed due to buffer overflows. Assume buffer size is $K=10$ packets, the queue length distribution $\vec\pi^K$ is shown in Fig. \ref{fig:piK10}. It is seen that the queue length is generally small and is increasing with $\theta$. Similarly, the packet delay grows with both $\theta$ and buffer size, as shown in Fig. \ref{fig:EDf}.

Finally, the overflow probability of a finite buffer is given in Fig. \ref{fig:po}. For a smaller $\theta$, i.e., slight traffics, the overflow probability is smaller and decreases more quickly. However, as is shown, all the overflow probability decrease exponentially in all cases.

\newpage

\section{Conclusion}\label{sec:7}

Modern communications requires the wireless channels to provides QoS (quality of service) guaranteed services. How to characterize and make full use of the service capability of fading channels is an urgent problem. In this paper, we solved the problem in the case of low SNR block Rayleigh fading channels by taking advantage of the memoryless property of their one block services. However, for general fading channels, the answers are quite unclear, which needs a lot of further efforts.

\section*{Acknowledgement}
This work was partly supported by the China Major
State Basic Research Development Program (973 Program) No.
2012CB316100(2), National Natural Science Foundation of China
(NSFC) No. 61171064 and NSFC No. 61021001.

\appendices
\renewcommand{\theequation}{\thesection.\arabic{equation}}

\newcounter{mytempthcnt}
\setcounter{mytempthcnt}{\value{theorem}}
\setcounter{theorem}{2}

\section{Proof of Theorem 3}\label{prf:them3}

\begin{theorem}\label{th:delay}
    In the FIFO discipline, the average packet delay is given by
        \begin{equation}\label{rt:E_D}
            \textsf{E}[D]=\frac{1}{2}+\theta+\frac{\theta^2}{2(1-\theta)}+\int_0^1 (x-1) e^{\frac{-\theta}{x}} dx.
        \end{equation}
\end{theorem}

\begin{proof} \quad

    \textit{A. The Service and Waiting Time}

    Since there is only one packet arrives in each block, the average time that a packet stays in the system ($S+W$) equals to the average queue length by the Little's law. So we have
    \begin{equation}\label{dr:sw}
        \textsf{E}[S+W]=\textsf{E}[L^+]=\lim_{z\rightarrow1}L^+(z)=\frac{\theta(2-\theta)}{2(1-\theta)}.
    \end{equation}

    \textit{B. The Vestige Time}

    To investigate the vestige time $V$, we have to define some new random variables. Firstly, for $k\geq1$, define
    \begin{equation}
        \begin{split}
            &U_k=L_p-S_k,~~~U_k^+=U_k|_{U_k>0},\\
            &V_k=\frac{U_k^+}{s_n},~~~~~~~~V_k^-=V_k|_{V_k<1},
        \end{split}
    \end{equation}
    where $s_n$ is the service provided by the channel in one block and $S_k$ is the total amount of service provided by $k$ successive blocks defined in (\ref{df:Sk}).
    By the definition, $U_k$ is the remaining part of a packet after $k$ blocks of transmission, which is not necessarily positive. Using a positive condition, we get $U_k^+$. In this case, there is really part of the packet untransmitted. Assuming the channel service of the next block is $s_n$, the remaining amount of data $U_k^+$ needs $V_k$ blocks to be transmitted. However, $V_k$ is the vestige time only if the remaining data can be transmitted within this block, which reduces to $V_k^+$.

    As one can see, $U_k\in(-\infty,L_p)$, $U_k^+\in(0,L_p)$, $V_k\in(0,\infty)$ and $V_k^+\in(0,1)$. Particularly, the CDF of $U_k$ can be derived as follows.
    \begin{equation}
        \begin{split}
        F_{U_k}(x)=&\Pr\{U_k\leq x\}=\Pr\{S_k\geq L_p-x\}\\
        =&1-F_{S}(L_p-x)=\Gamma(k,\frac{L_p-x}{\nu})\\
        =&\frac{1}{\Gamma(k)}\int_{\frac{L_p-x}{\nu}}^\infty t^{k-1}e^{-t} dt,
        \end{split}
    \end{equation}
    where $\Gamma(k,x)=\frac{1}{\Gamma(k)}\int_x^\infty t^{k-1}e^{-t} dt$ is the upper Gamma function and $\nu=WT_B\rho$.

    Therefore, the CDF of $U^+$ is
    \begin{equation}
        \begin{split}
            F_{U_k^+}=&\Pr\{U_k^+\leq x|U_k^+>0\}
            =\frac{\Pr\{0<U_k^+\leq x\}}{\Pr\{U_k^+>0\}}\\
            =&\frac{F_{U_k}(x)-F_{U_k}(0)}{1-F_{U_k}(0)}
            =\frac{\int_{\frac{L_p-x}{\nu}}^\infty t^{k-1} e^{-t} dt - \Gamma(k, \theta)}   {\gamma(k,\theta )},
        \end{split}
    \end{equation}
    where $\gamma(k,x)=\frac{1}{\Gamma(k)}\int_0^x t^{k-1}e^{-t} dt$ is the lower Gamma function. For $x\in(0,L_p)$, taking derivative on $F_{U_k^+}$, we get the {\em p.d.f.} of $U_k^+$ as
    \begin{equation}
        f_{U_k^+}(x)=\frac{1}{\nu^k\gamma(k,\theta)} (L_p-x)^{k-1} e^{-\frac{L_p-x}{\nu}}.
    \end{equation}

    For some $x\in(0,+\infty)$, define $D$ as the area in the $U_k^+-s_n$ plane that $0<U_k^+<L_p$ and $s_n>\frac{U_k^+}{x}$. Since $U_k^+$ and the channel service of the next block ($s_n$) are random variables independent from each other, the CDF of $V_k$ can be derived as follows.
    \begin{equation}
        \begin{split}
        &F_{V_k}(x)=\Pr\{V_k\leq x \}=\Pr\{U_k^+\leq xs_n \}\\
        =&\iint_D f_{U_k^+}(u)f_s(s)duds\\\\
        =&\int_0^{L_p} f_{U_k^+}(u) du \int_{\frac{u}{x}}^\infty \frac{1}{\nu} e^{-\frac{s}{\nu}} ds\\
        =&\frac{1}{\nu^k\gamma(k,\theta)} \int_0^{L_p} (L_p-u)^{k-1}e^{-\frac{1}{\nu}{(L_p-u+\frac{u}{x})}} du.
        \end{split}
    \end{equation}

    So one can get the probability that $V_k$ is less than 1 as
     \begin{equation}
        \begin{split}
        \Pr\{V_k<1\}=&\frac{1}{\nu^k\gamma(k,\theta)} \int_0^{L_p} (L_p-u)^{k-1}e^{-\frac{L_p}{\nu}} du\\
        =&\frac{e^{-\theta}}{\nu^k\gamma(k,\theta)} \frac{L_p^k}{k}
        \end{split}
    \end{equation}

    Thus, the CDF of $V_k^-$ is
    \begin{equation}
        \begin{split}
            F_{V_k^-}(x)=&\Pr\{V_k^-\leq x|V_k<1\}=\frac{\Pr\{V_k<x\}}{\Pr\{V_k<1\}}\\
            =&e^{\theta} \frac{k}{L_p^k} \int_0^{L_p} (L_p-u)^{k-1}e^{-\frac{1}{\nu}{(L_p-u-\frac{u}{x})}} du\\
            =&\frac{k}{L_p}\int_0^{L_p} \left(\frac{L_p-u}{L_p}\right)^{k-1}e^{\frac{u}{\nu}{(1-\frac{1}{x})}} du,
        \end{split}
    \end{equation}
    for $x\in(0,1)$ and the average of $V_k^-$ can be expressed as
    \begin{equation}\label{dr:E_xk_}
        \begin{split}
            \textsf{E}[V_k^-]=&\int_0^1 x dF_{V_k^-}(x)
            \stackrel{(a)}{=}xF_{V_k^-}(x)|_0^1-\int_0^1 F_{V_k^-}(x) dx\\
            =&1-\frac{k}{L_p}\int_0^1 \int_0^{L_p} \left(\frac{L_p-u}{L_p}\right)^{k-1}e^{\frac{u}{\nu}{(1-\frac{1}{x})}} dxdu,
        \end{split}
    \end{equation}
    where (a) follows the integration by parts.

    For the case of $k=0$, define $V_0=\frac{L_p}{s_n}$ and $V_0^-=V_0|_{V_0<1}$. Specifically, if one packet is completed within one block, the vestige time equals to its actual service time. The CDF of $V_0$ and $V_0^-$ are, respectively
    \begin{equation}
        \begin{split}
            F_{V_0}=&\Pr\{V_0\leq x\}=\Pr\{s_n\geq \frac{L_p}{x}\}
            =e^{-\frac{\theta}{x}}
        \end{split}
    \end{equation}
    for $x\in(0,+\infty)$   and
    \begin{equation}
        \begin{split}
            F_{V_0^-}=&\Pr\{V_0\leq x|V_0<1\}=\frac{\Pr\{V_0\leq x\}} {\Pr\{V_0\leq 1\}}
            =e^{\theta\left(1-\frac{1}{x}\right)}
        \end{split}
    \end{equation}
    for $x\in(0,1)$.

    Then we can get the expected value of $V_0^-$ as
    \begin{equation}\label{dr:E_V0}
        \begin{split}
            \textsf{E}[V_0^-]=&\int_0^1 xdF_{V_0^-}(x)\\
            =&xe^{\theta\left(1-\frac{1}{x}\right)}|_0^1 - \int_0^1 e^{\theta\left(1-\frac{1}{x}\right)}dx\\
            =&1-e^\theta\int_0^1 e^{-\frac{\theta}{x}}dx.
        \end{split}
    \end{equation}

    Finally, by the whole probability formula, the expected value of vestige time $V$ is
    \begin{equation}\label{dr:E_V}
           \textsf{E}[V]=\textsf{E}[V_0^-]\Pr\{T=0\}+\sum_{k=1}^\infty \textsf{E}[V_k^-]\Pr\{T=k\},
    \end{equation}
    where
    \begin{equation}\label{dr:E_Vremaining}
        \begin{split}
            &\sum_{k=1}^\infty \textsf{E}[V_k^-]\Pr\{T_n=k\}\\
            =&\sum_{k=1}^\infty \left[ 1-\frac{k}{L_p}\int_0^1 \int_0^{L_p} \left(\frac{L_p-u}{L_p}\right)^{k-1}e^{\frac{u}{\nu}{(1-\frac{1}{x})}} dxdu \right]\frac{1}{k!}e^{-\theta} \theta^k\\
            =&1-e^{-\theta}- \frac{k}{L_p} \int_0^1\int_0^{L_p} \sum_{k=1}^\infty \left(\theta \frac{L_p-u}{L_p}\right)^{k-1} \frac{\theta}{k!}e^{-\theta} e^{\frac{u}{\nu}{(1-\frac{1}{x})}} dxdu\\
            \stackrel{(a)}{=}&1-e^{-\theta}- \frac{\theta e^{-\theta}}{L_p} \int_0^1\int_0^{L_p} e^{ \theta\left(1- \frac{u}{L_p}\right) } e^{\frac{u}{\nu}{(1-\frac{1}{x})}} dxdu\\
            \stackrel{(b)}{=}&1-e^{-\theta}- \frac{\theta}{L_p} \int_0^1\int_0^{L_p}  e^{\frac{-u}{\nu x}} dxdu\\
            =&1-e^{-\theta}- \int_0^1 x \left(1-  e^{\frac{-\theta}{x}}\right) dx\\
            =&\frac{1}{2}-e^{-\theta}+\int_0^1 x e^{\frac{-\theta}{x}} dx,
        \end{split}
    \end{equation}
    where (a) follows $\sum_{k=1}^\infty \frac{1}{(k-1)!}x^{k-1}=e^x$ and (b) uses the equation $\frac{\theta}{L_p}\cdot \nu=1$.

    Combining (\ref{dr:E_V0}), (\ref{dr:E_V}) and (\ref{dr:E_Vremaining}) we have
    \begin{equation}\label{rt:E_V}
        \begin{split}
            \textsf{E}[V]=&e^{-\theta}\cdot\left(1-e^\theta\int_0^1e^{-\frac{\theta}{x}}dx\right)+\frac{1}{2}-e^{-\theta}+\int_0^1 x e^{\frac{-\theta}{x}} dx\\
            =&\frac{1}{2}+\int_0^1 (x-1) e^{\frac{-\theta}{x}} dx.
        \end{split}
    \end{equation}

    Finally, combing (\ref{df:delay}), (\ref{dr:sw}) and (\ref{rt:E_V}), we have
    \begin{equation}\label{rt:E_D_1}
        \textsf{E}[D]=\theta+\frac{\theta^2}{2(1-\theta)}+\frac{1}{2}+\int_0^1 (x-1) e^{\frac{-\theta}{x}} dx,
    \end{equation}
    which completes the proof of Theorem \ref{th:delay}.

\end{proof}

\newcounter{mytemplemcnt}
\setcounter{mytemplemcnt}{\value{lemma}}
\setcounter{lemma}{0}

\bibliographystyle{IEEEtran}

\end{document}